\documentclass[12pt,draftcls,onecolumn]{IEEEtran}

\usepackage{graphicx}
\usepackage{url}
\usepackage[pdftex]{color}
\usepackage{amsmath}
\usepackage{amsfonts}
\usepackage{algorithm}
\usepackage{algcompatible}
\usepackage{hyperref}
\usepackage{caption}
\usepackage{subcaption}

\usepackage{amssymb}

\newtheorem{teor}{Theorem}[section]

\newtheorem{propo}{Proposition}[section]
\newtheorem{lemm}{Lemma}[section]

\newtheorem{rem}{Remark}
\usepackage{soul}
\usepackage{tabularx}

\newcommand{\tp}{^{\top}}
\newcommand{\p}{\parallel}
\newcommand{\beq}{\begin{equation}}
	\newcommand{\eeq}{\end{equation}}
\newcommand{\bea}{\begin{eqnarray}}
	\newcommand{\eea}{\end{eqnarray}}

\newcommand{\bsea}{\begin{subeqnarray}}
	\newcommand{\esea}{\end{subeqnarray}}

\def\bmat{\left[ \begin{array}}
	\def\emat{\end{array} \right]}
\DeclareMathOperator{\tr}{tr} 
\DeclareMathOperator{\ofd}{ofd}   
\DeclareMathOperator{\diag}{diag} 

\DeclareMathOperator*{\argmin}{arg\,min}

\DeclareMathOperator{\dom}{dom}
\newcommand{\alg}[1]{\begin{align}#1\end{align}}
\newcommand{\Qc}{\mathcal Q}
\newcommand{\Rs}{\mathbb R}
\newcommand{\Zs}{\mathbb Z}

\definecolor{purple}{rgb}{0.62, 0.0, 0.77}

\definecolor{Royalblue}{cmyk}{1,0.30,0.2,0.2}

\begin{document}		
\title{A Robust Approach to ARMA Factor Modeling}

\author{Lucia~Falconi, Augusto~Ferrante, Mattia~Zorzi 
\thanks{L. Falconi, A. Ferrante, and M. Zorzi are with  the Department of Information Engineering, University of Padova, Padova, Italy; e-mail: {\tt\small lucia.falconi@phd.unipd.it} (L. Falconi); {\tt\small augusto@dei.unipd.it} (A. Ferrante); {\tt\small zorzimat@dei.unipd.it} (M. Zorzi).}}

\markboth{DRAFT}{Shell \MakeLowercase{\textit{et al.}}: Bare Demo of IEEEtran.cls for Journals}

\maketitle
	
\begin{abstract}
	This paper deals with the dynamic factor analysis problem for an ARMA process. 
	To robustly estimate the number of factors, we construct a confidence region centered in a finite sample estimate of the underlying model which contains the true model with a prescribed probability. In this confidence region, the problem, formulated as a rank minimization of a suitable spectral density, is efficiently approximated via a trace norm convex relaxation. The latter is addressed by resorting to the Lagrange duality theory, which allows to prove the existence of solutions. Finally, a numerical algorithm to solve the dual problem is presented.
	The effectiveness of the proposed estimator is assessed through simulation studies both with synthetic and real data.
\end{abstract}
	
\begin{IEEEkeywords}
Convex optimization, duality theory, dynamic factor analysis, nuclear norm.
\end{IEEEkeywords}
	
\section{Introduction}

We deal with the problem of constructing a dynamical model
from a high dimensional stream of data that are assumed to be noisy observations
of a process depending on a small number of hidden variables.
In the static case, this problem is known as
{\em factor analysis}. Its origins can be traced back to the beginning 
of the last century and the amount of literature  produced on this topic
is impressive: we refer the readers to  the recent papers \cite{ning2015linear,bertsimas2017certifiably,CFZ_Kybernetika,zorzi2015factor} for 
an overview of the literature and a rich list of references.
The solution of factor analysis problems may be obtained by 
decomposing the covariance matrix of the observed data as the sum
of a diagonal positive definite matrix (accounting for the noise covariance)
and a positive semidefinite matrix whose rank must be as small as possible
since it equals the number of hidden variables in the model. The main problem of this solution is that it is inherently fragile; in fact, even a minuscule variation in the 
covariance matrix of the observed data usually leads to a substantial variation
of the number of hidden variables which is the key feature of the modelling procedure.
On the other hand such a matrix must be estimated and is therefore
subject to errors.
To address this fragility issue, a robust method  has been recently proposed, 
\cite{ciccone2017factor}, that has been generalized with good results also to the dynamic 
framework of learning latent variable dynamic graphical models, \cite{CFZ_TAC19}.

Dynamic Factor Analysis (DFA) has been addressed much more recently, the first contribution in this 
field being apparently \cite{Geweke-77}.
We refer to the surveys  \cite{deistler2007,Stock-Watson-2010} and to the recent paper 
\cite{Figa-Talamanca-et-al} for an overview of the literature on this subject.
In \cite{Bottegal-P-15} an interesting generalization is applied to modelization of dynamical systems.

In this paper, we address the dynamic autoregressive moving average (ARMA) case with the aim of extracting, from the observed data, a model featuring a small number of hidden variables. This is important both from the point 
of view of the model simplicity and to uncover the structure of the mechanism 
generating the data.
The problem may be mathematically formulated as that of decomposing the spectral density
of the process generating the data as the sum of a diagonal spectral density and a low-rank 
one. The fragility issue in this case is even more severe.  
We address the problem as follows:
\begin{itemize}
\item
Given the observed data, we compute by standard methods (e.g. truncated periodogram) a raw estimate 
$\hat{\Phi}$ of the spectral density $\Phi$ generating the data.
\item 
We compute a neighbourhood $\cal N$ of $\hat{\Phi}$ that contains $\Phi$ with prescribed probability;
clearly the size of $\cal N$ depends on the sample size.
\item
We compute a refined estimate $\Phi^\circ \in {\cal N}$ 
by imposing that it admits an additive decomposition as a diagonal spectral density and a spectral density with the lowest possible rank. To this end we set up an optimization problem
that we address by resorting to duality.
In particular, we prove existence of solutions and provide a numerical algorithm to compute
a solution. 
\end{itemize}

Our work may be cast in the rich stream of literature devoted to
learning dynamic models having a topological structure describing the presence or the absence
of interactions among the variables of the systems;  see the former works  
\cite{songsiri2010topology,ARMA_GRAPH_AVVENTI,maanan2017conditional} as well as their extensions to reciprocal processes \cite{Alpago_ELETTERS,Alpago_TAC}, sparse plus low rank graphical models  \cite{ZorzSep,MEManSaid,CFZ_TAC19}, the Bayesian viewpoint proposed in \cite{zorzi_reweight,ZORZI2020109053}
and the case of oriented graphical models \cite{v2020topology,veedu2020topology}.

The contribution of this paper is twofold; first, we propose a procedure to estimate the number of latent factors in dynamic ARMA factor models: this is the most delicate aspect of factor analysis
problems; second, we derive an identification method to estimate the parameters of a factor model describing the observed data. 
%

The outline of the paper is as follows. In Section \ref{Sec:probform} we introduce the DFA problem for moving average (MA) models. Section \ref{Sec:sol} shows that such a problem admits solution by means of duality theory, while Section \ref{Sec:equiv} shows how to reconstruct the solution of the primal problem from the dual one. In Section \ref{Sec:algo} we propose an algorithm to compute the solution of the dual problem. In Section \ref{Sec:ARMAid} we extend the previous ideas to ARMA models. Section \ref{sec:sim} presents some numerical results. Finally, in Section \ref{sec:concl} we draw the conclusions.

\subsection{Notation}
Given a matrix $M$, we denote its transpose by $M^\top$ and by $M_{(i,j)}$ the element of $M$ in the $i$-th row and $j$-th column. If $M$ is a square matrix,  $\tr(M)$, $|M|$ and $\sigma(M)$ denote its trace, its determinant and its spectrum, respectively.
The symbol $\Vert \cdot \Vert$ stands for the Frobenius norm. 
For $A, B \in \Rs^{m \times m}$, we define their inner product as $\langle A,B \rangle:= \tr(A^\top B)$.
Let $\mathbf{Q}_m$ be the space of real symmetric matrices of size $m$; if $M \in \mathbf{Q}_m$ is positive definite or positive semidefinite, then we write $M \succ 0$ or $M \succeq 0$, respectively.
We denote by $(\cdot)^*$ the complex conjugate transpose. $\Phi(e^{i\vartheta})$ for $\vartheta \in [-\pi, \pi] \} $ denotes a function defined on the unit circle $\{e^{i\vartheta}: \vartheta \in [-\pi, \pi] \} $, and the dependence on $\vartheta$ is dropped if necessary. If $\Phi(e^{i\vartheta})$ is positive (semi-)definite $\forall \vartheta \in [-\pi, \pi]  $ we write $\Phi(e^{i\vartheta}) \succ 0$ ( $\succeq 0$).
Integrals are always defined from $-\pi$ to $\pi$ with respect to the normalized Lebesgue measure $d\vartheta /  2 \pi$. 

\section{Identification of MA factor models} \label{Sec:probform}
Consider the MA factor model whose order is $n$:
\alg{\label{MA_FA} y(t)=W_L u(t)+W_D w(t)}
where 
$$ W_L(e^{i\vartheta })=\sum_{k=0}^n W_{L,k}e^{-i\vartheta k}, \quad  W_D(e^{i\vartheta })=\sum_{k=0}^n W_{D,k}e^{-i\vartheta k},$$ 
$W_{L,k}\in \Rs^{m\times r}$, $W_{D,k}\in \Rs^{m\times m}$ diagonal; $u=\{u(t),\; t\in\Zs\}$ and $w=\{w(t),\; t\in\Zs\}$ are normalized white Gaussian noises of dimension 
$r$ and $m$, respectively, such that 
$ \mathbb{E} [u(t) w(s)\tp ] = 0 $ $\forall t,s.$
The aforementioned model has the following interpretation: $u$ is the process which describes the $r$ factors, with  $r \ll m$, not accessible to observation; $W_L$ is the factor loading transfer matrix; $W_L u(t)$ is the latent variable; $W_D w(t)$ is idiosyncratic noise. 
Accordingly,  $y=\{y(t),\; t\in\Zs\}$ is a  $m$-dimensional Gaussian stationary stochastic process with power spectral density 
\alg{\label{LD_decomp}\Phi=\Phi_L+\Phi_D} 
where $\Phi_L=W_L W_L^*\succeq 0$ and $\Phi_D=W_D W_D^*\succeq 0$ belong to the finite dimensional space:
$$ \Qc_{m,n}=\left\{\sum_{k=-n}^n R_k e^{-i\vartheta k}, \; \; R_k=R_{-k}^T \in\Rs ^{m\times m} \right\}. $$
By construction, $\mathrm{rank}(\Phi_L)=r$, where $\mathrm{rank}$ denotes the normal rank (i.e. the rank almost everywhere),  and $\Phi_D$ is diagonal.
Therefore, $y$ represents a factor model if its spectral density can be decomposed as ``low rank plus diagonal'' as in (\ref{LD_decomp}).

Assume to collect a finite length realization of $y$ defined in (\ref{MA_FA}), say $\mathrm y^N=\{\, \mathrm y(1)\ldots \mathrm y(N) \, \}$ where the order $n$ is known. We want to estimate the corresponding factor model, that is the decomposition in (\ref{LD_decomp}) as well as the number of factors $r$. To this aim, given our data $\mathrm y^{N}$, we first compute the sample covariance lags $\hat{R}_j$ as
$$
\hat{R}_j=\dfrac{1}{N}\sum_{t=0}^{N-j}\mathrm y(t+j)\mathrm y(t)^\top, \; \; \; j=0\ldots n.
$$ 
Then,  an estimate $\hat{\Phi}$ of $\Phi$ is obtained by the truncated periodogram:
\begin{align}\label{pb_per}
	\hat{\Phi}=\sum_{k=-n}^n \hat R_k  e^{i\vartheta k}. \end{align} 
Notice that $\hat \Phi$ could  be not positive definite for all $\vartheta$; in that case, we can add $\varepsilon I_m$ to the right side of Equation \eqref{pb_per}, with the constant $\varepsilon > 0 $ chosen in such a way as to ensure the positivity of $\hat \Phi$. On the other hand, $\hat{\Phi}$ may not admit a low rank plus diagonal decomposition. 
Thus, we  estimate directly the two terms  $\Phi_L$ and $\Phi_D$ of the decomposition 
(\ref{LD_decomp})
by solving  the following optimization problem:
\begin{equation}
	\label{new0}
	\begin{aligned}
		\min_{\Phi,\Phi_L,\Phi_D \in  \mathcal{Q}_{m,n}} & \tr {\int}\Phi_L  \\
		\text{subject to  } &  \Phi_L+\Phi_D= \Phi,\\ 
		& \Phi\succ 0 \text{ a.e.}, \; \; \Phi_L,\Phi_D\succeq  0, \\
		& \Phi_D \hbox{ diagonal},\\
		& \mathcal{S}_{IS}(\Phi|| \hat{\Phi})\leq \delta.
	\end{aligned}
\end{equation} 
Here, the objective function promotes a solution for $\Phi_L$ having low rank, see \cite{ZorzSep}.  The first three constraints impose that  $\Phi_L$ and $\Phi_D$ provide a genuine spectral density decomposition of type 
(\ref{LD_decomp}). The last constraint, in which
$\mathcal{S}_{IS}(\Phi||\hat{\Phi})$ is the Itakura-Saito divergence defined by
$$
\mathcal{S}_{IS} ( \Phi || \hat \Phi )= \int \log |\hat \Phi {\Phi}^{-1}| +\tr [\hat{\Phi}^{-1} \Phi- I_m],
$$
imposes that $\Phi$ belongs to a set ``centered'' in the nominal spectral density 
$\hat{\Phi}$ and with prescribed tolerance $\delta$. Notice that $\Phi_D$ is uniquely determined by $\Phi$ and $\Phi_L$. Thus, Problem (\ref{new0}) can be rewritten by removing $\Phi_D$: 
\begin{equation}
	\label{new}
	\begin{aligned}
		(\Phi^\circ, \Phi_L^\circ)= \argmin_{\Phi,\Phi_L  \in  \mathcal{Q}_{m,n}} & \tr {\int}\Phi_L  \\
		\text{subject to  }  & \Phi\succ 0 \text{ a.e.}, \; \; \Phi_L,\Phi-\Phi_L\succeq  0, \\
		& \Phi-\Phi_L \hbox{ diagonal},\\
		& \mathcal{S}_{IS}(\Phi|| \hat{\Phi})\leq \delta.
	\end{aligned}
\end{equation} 


\subsection{The Choice of $\delta$}\label{Sec:delta}
Before solving our problem, we deal with the choice of the tolerance parameter $\delta$  appearing in the constraint of (\ref{new}). 
This choice should reflect the accuracy of the estimate $\hat \Phi$ of $\Phi$. 
This can be accomplished by choosing a desired probability $\alpha\in (0,1)$ and considering a ball of radius $\delta_\alpha$ (in the Itakura-Saito topology) 
centered in $\hat \Phi$ and containing the true spectrum $\Phi$ with probability $\alpha$. The estimation of $\delta_\alpha$ is not an easy task because we do not know the true power spectral density $\Phi$. Next, we propose a resampling-based method to estimate it. \\
The idea is to approximate $\Phi $ with $\hat{\Phi}$, and use this model to perform a resampling operation. Let $$ W(e^{i\vartheta}) = \sum_{k=0}^n W_{k}e^{-i\vartheta k}, \; W_{k}\in \Rs^{m\times m} $$	be the minimum phase spectral factor of $\hat{\Phi}$ and define the process $\hat{y} =\{\hat{y}(t),\; t\in\Zs\}$ as $\hat{y}(t) := W(e^{i\vartheta}) e(t),$ where  $e(t)$ is an $m$-dimensional normalized white noise. The truncated periodogram (understood as estimator) based on a sample of the process $\hat {y}$ of length $N$ is 
$$ \hat{\mathbf{\Phi}}_r (e^{i\vartheta}) = \sum_{k=-n}^n  e^{-i\vartheta k} \frac{1}{N}\sum_{t=0}^{N-k} \hat y(t+k)\hat y(t)^T , $$
where the subscript  ``$r$'' stands for resampling, as it is the means by which we perform the resampling operation.
By generating a realization $\hat{\mathrm {y}} ^N=\{\, \hat{\mathrm {y}} (1)\ldots \hat{\mathrm {y}} (N)\, \} $ from $\hat{\Phi}$ (i.e. by resampling the data), we can easily obtain a realization of the random variable $\mathcal{S}_{IS}( \hat{\Phi} ||\hat{\mathbf{\Phi}}_r )$. Accordingly, it is possible to compute numerically $\delta_\alpha$ such that $\text{Pr}(\mathcal{S}_{IS}( \hat \Phi || \hat{\mathbf{\Phi}}_r ) \leq \delta_\alpha) = \alpha$ by a standard Monte Carlo procedure.
Numerical simulations show that this technique indeed provides a good estimate of $\delta$.

It is worth noting that if the chosen $\alpha$ is too large with respect to the data length $N$, the resulting $\delta_\alpha$ may be too generous yielding to a diagonal $\Phi$ obeying $\mathcal{S}_{IS}(\Phi|| \hat \Phi)\leq \delta_\alpha$. In this case Problem \eqref{new} admits the trivial solution $\Phi_L=0$ and $\Phi_D=\Phi$.
To rule out this trivial case, $\delta$ in \eqref{new}  must be be strictly smaller than the upper bound 
$$\delta_{\max}:= \min_{\substack{\Phi \in \mathcal{S}_m^+\\ \Phi \hbox{ diagonal}}}  \mathcal{S}_{IS}( \Phi|| \hat{\Phi} )$$ 
where $\mathcal{S}_m^+$ denotes the family of bounded and coercive functions
defined on the unit circle and taking values in the cone of
positive definite $m\times m$ Hermitian matrices.  
Since $\Phi$ must be diagonal, by denoting with $\phi_i$ and by $\hat{\gamma}_i$ the $i$-th element in the diagonal of $\Phi$ and of $\hat{\Phi}^{-1}$, respectively,  we have
\begin{align*}
	\delta_{\max} &= \left[\sum_{i=1}^{m} \min_{\phi_i \in \mathcal{S}^+_1}  \mathcal{S}_{IS}({\phi}_i || \hat \gamma_i^{-1})\right] +\int \log | \hat \Phi \diag^2(\hat \Phi^{-1})| 
\end{align*}
where $\diag^2(\cdot)$ is the (orthogonal projection) operator mapping  a square matrix $M$ into a diagonal matrix of the same size having the same main diagonal of $M$.
Therefore, since the Itakura-Saito divergence is nonnegative, the solution corresponds to $\phi^{opt}_{i} (e^{i\vartheta})= (\hat{\gamma}_i(e^{i\vartheta}))^{-1}$, $i=1,...,m$ for which $ \mathcal{S}_{IS}( \phi_i^{opt}|| \hat{\gamma}_i^{-1})=0$.
Accordingly,  
\begin{equation}\label{deltamax}
	\delta_{max}  =\int \log | \hat \Phi \diag^2(\hat \Phi^{-1})|.
\end{equation}
The derivation of the aforementioned result is based on reasonings similar to \cite[Section IV]{CFZ_TAC19}.

A more generous upper bound can be derived by assuming that $\Phi$ is the spectrum of an MA process of order $n$.
However, numerical experiments showed that $\delta_{max}\gg \delta_{\alpha}$ even in the case that $N$ is relatively small.

\section{Problem solution} \label{Sec:sol}
In this section we first provide a finite dimensional matrix parametrization of Problem (\ref{new}). The latter is then analyzed by resorting to the Lagrange duality theory, which allows us to prove the existence of a solution.

\subsection{Matricial Reparametrization of the Problem}\label{section_pb_formulation}
To study Problem \eqref{new} it is convenient to introduce the following matrix parametrization for $\Phi, \Phi_L$ and $\Phi-\Phi_L$:
\begin{equation}\label{matrepofspectra}
	\begin{aligned}
		\Phi= \Delta X\Delta^* \; & \in \mathcal{Q}_{m,n}\\
		\Phi_L= \Delta L\Delta^*\; & \in \mathcal{Q}_{m,n}\\
		\Phi - \Phi_L= \Delta (X-L) \Delta^* \; & \in \mathcal{Q}_{m,n}\\
	\end{aligned}
\end{equation}
where $\Delta(e^{i\theta})$ is the so-called shift operator:
\beq\label{defDelta}
\Delta(e^{i\vartheta}):= [I_m \quad e^{i\vartheta}I_m  \quad \dots \quad e^{in\vartheta}I_m];
\eeq
$X$ and $L$ are matrices in $\mathbf{Q}_{m(n+1)}$  
and $X_{ij}$ denotes the block of $X$ in position $i,j$ with $i,j=0,\dots,n$, so that
\[X=\begin{bmatrix}
	X_{00}       & X_{01}  & \dots & X_{0n} \\
	X_{01}\tp       & X_{11}  & \dots & \vdots \\  
	\vdots  & \vdots & \vdots & \vdots \\ 
	X_{0n}\tp       & X_{1n}\tp & \dots  & X_{nn}
\end{bmatrix}.\] Moreover, $\textbf{M}_{m,n}$ denotes the vector space of matrices of the form 
\beq\label{formulaperyinmmn}
Y:=[Y_0 \quad Y_1 \quad ... \quad Y_n],\quad Y_0 \in \mathbf{Q}_m,\ \ Y_1, ...,Y_n \in \mathbb{R}^{m\times m}.
\eeq
The linear mapping $T:\mathbf{M}_{m,n}\rightarrow \mathbf{Q}_{m(n+1)}$  constructs a symmetric 
block-Toeplitz matrix from its first block row so that if $Y$ is given by (\ref{formulaperyinmmn}),
\[T(Y)=\begin{bmatrix}
	Y_{0}       & Y_{1} 
	& \dots & Y_{n} \\
	Y_{1}\tp       & Y_{0} 
	& \ddots & \vdots \\
	\vdots  & \ddots 
	& \ddots & Y_{1}\\ 
	Y_{n}\tp       & \dots 
	& Y_{1}\tp & Y_{0}
\end{bmatrix}.\] The adjoint of $T$ is the mapping $D:\mathbf{Q}_{m(n+1)}\rightarrow \mathbf{M}_{m,n}$ defined by $D(X)=[[D(X)]_0\quad \dots \quad [D(X)]_n]$ with
\[[D(X)]_0= \sum_{h=0}^{n}X_{hh}, \quad [D(X)]_j=2\sum_{h=0}^{n-j} X_{h h+j}, \  j=1,...,n.\]

Next, the objective is to provide a  more convenient formulation of Problem \eqref{new} in terms of $X$ and $L$.
To this end, we have to take into account the following points.

\textit{ 1) Positivity Constraints $\Phi \succ 0\text{ a.e.} $  and $\Phi_L, \Phi-\Phi_D \succeq 0:$ }
It can been shown (see, for example, \cite[Appendix A]{ZorzSep}) that, for any $\Psi \in \mathcal{Q}_{m,n}$, $\Psi\succeq 0$ if and only if there exists a matrix $P\in \mathbf{Q}_{m(n+1)}$ such that $\Delta P \Delta^*$ and $P\succeq 0$. Therefore, we replace the conditions $\Phi_L \succeq 0$ with $L\succeq 0$, the condition $\Phi-\Phi_L \succeq 0$ with $X -L\succeq 0$. Note that these conditions only guarantees $X\succeq 0$ and thus $\Phi$ to be positive semidefinite, however we will show that this is sufficient to guarantee that $\Phi \succ 0 \text{ a.e.}$ at the optimum.

\textit{2) Constraint $\Phi-\Phi_L$ diagonal:}
Let $\ofd\,:\, \Rs^{m\times m}\rightarrow \Rs^{m\times m}$ denote the linear operator such that, given $A\in \Rs^{m\times m}$, $\ofd(A)$ is the matrix in which each off-diagonal element is equal to the corresponding element of $A$ and each diagonal element is zero. We define the ``block ofd'' linear operator $\ofd_B \,: \,\mathbf{M}_{m,n} \rightarrow \mathbf{M}_{m,n}$ as follows. Given $Z=[\, Z_0 \; Z_1 \ldots Z_n\,]\in\mathbf{M}_{m,n}$, then
$$ \ofd_B(Z)=[\, \ofd(Z_0) \; \ofd(Z_1) \ldots \ofd(Z_n)\,]. $$ 
It is not difficult that $\ofd_B$ is a self-adjoint operator, since $\ofd$ is self-adjoint as well. 
Then, it is easy to see that the condition $\Phi-\Phi_L$ diagonal is equivalent to the condition $ [D(X - L)]_j$ diagonal for $j=0,\dots,n$, that is $\ofd_B(D(X-L))=0$.

\textit{3) The Low Rank Regularizer:} We have
\begin{align*}\tr\int \Phi_L  &=\tr\int \Delta L\Delta^* 
	= \tr \left( L\int \Delta^*\Delta \right) = \tr(L)
\end{align*}
where we exploited  the fact that
$\int e^{ij\vartheta}= 1$ if $j=0$, and $\int e^{ij\vartheta}= 0$ otherwise.

\textit{4) The Divergence Constraint:}
A convenient matrix parameterization of the Itakura-Saito divergence $\mathcal{S}_{IS}( \Phi || \hat\Phi)$ can be obtained by making use of the following facts.

First, since $\Phi=\Delta X \Delta^*$ with $X\succeq 0$, there exists $A\in \mathbb{R}^{m\times m(n+1)}$ such that $X=A^\top A$. Then, by using the Jensen-Kolmogorov formula we obtain
\beq \label{eq::JKformula}
\int  \log|\Phi| = \!\! \int  \log |\Delta A^\top A\Delta^*|  = \log |A_0^\top A_0|=\log|X_{00}| \eeq
which holds provided that $X_{00}\succ 0$ and  $\Phi$ is coercive (i.e. $|\Phi|$ is bounded away from zero on the unit circle).
We need to generalize this result to spectral densities that may be singular on the unit circle.
This is possible because the zeros of a rational spectral density, if any, have finite multiplicity so that the logarithm of the determinant of a rational spectral $\Phi$ is integrable
as long as the normal rank of $\Phi$ is full.
\begin{lemm} \label{lemm:Jensen-Kolmogorov}
Consider a power spectral density $\Phi \in \Qc _{m,n}$ having full normal rank. 
Let $X \in \mathbf{Q}_{m(n+1)}$ be such that  $X\succeq 0$,  $X_{00}\succ 0$, and $\Phi = \Delta X \Delta^* $. 
Then $$ \int  \log|\Phi| =\log|X_{00}|. $$
\end{lemm}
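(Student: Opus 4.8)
The plan is to reduce to the coercive case already established in \eqref{eq::JKformula} by a perturbation argument. For $\varepsilon>0$ I would set $X_\varepsilon := X + \varepsilon E$, where $E\in\mathbf{Q}_{m(n+1)}$ is the block matrix whose $(0,0)$-block is $I_m$ and whose other blocks vanish. Then $X_\varepsilon\succeq 0$ as a sum of positive semidefinite matrices, $(X_\varepsilon)_{00}=X_{00}+\varepsilon I_m\succ 0$, and, since $\Delta E\Delta^* = I_m$, one gets $\Phi_\varepsilon := \Delta X_\varepsilon\Delta^* = \Phi + \varepsilon I_m \in\Qc_{m,n}$ with $\Phi_\varepsilon\succeq\varepsilon I_m\succ 0$ uniformly on the unit circle; in particular $\Phi_\varepsilon$ is coercive. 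Hence \eqref{eq::JKformula} applies to $\Phi_\varepsilon$ and yields $\int\log|\Phi_\varepsilon| = \log|(X_\varepsilon)_{00}| = \log|X_{00}+\varepsilon I_m|$.

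Next I would let $\varepsilon\downarrow 0$. The right-hand side converges to $\log|X_{00}|$ by continuity of the determinant. For the left-hand side, note that $\Phi_{\varepsilon'}(e^{i\vartheta})-\Phi_\varepsilon(e^{i\vartheta})=(\varepsilon'-\varepsilon)I_m\succeq 0$ for $\varepsilon'\ge\varepsilon$, so $\varepsilon\mapsto\Phi_\varepsilon(e^{i\vartheta})$ is nondecreasing in the positive semidefinite order and, since $\det$ is monotone on positive semidefinite matrices, $\varepsilon\mapsto\log|\Phi_\varepsilon(e^{i\vartheta})|$ is nondecreasing for every $\vartheta$. Moreover $\det\Phi(e^{i\vartheta})$ is a trigonometric polynomial which, by the full normal rank hypothesis, is not identically zero, hence vanishes only at finitely many points; for every $\vartheta$ outside this finite set one has $\Phi(e^{i\vartheta})\succ 0$ and $\log|\Phi_\varepsilon(e^{i\vartheta})|\to\log|\Phi(e^{i\vartheta})|$ as $\varepsilon\downarrow 0$. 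Finally $\log|\Phi_1|$ is continuous on the compact unit circle, hence bounded and integrable. Applying the monotone convergence theorem to $\log|\Phi_1|-\log|\Phi_\varepsilon|\uparrow\log|\Phi_1|-\log|\Phi|$ then gives $\int\log|\Phi_\varepsilon|\downarrow\int\log|\Phi|$, a priori with value in $[-\infty,+\infty)$. Combining the two limits yields $\int\log|\Phi| = \log|X_{00}|$; in particular $\log|\Phi|$ turns out to be integrable.

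The only delicate point is the interchange of limit and integral, and it is handled cleanly once one observes the monotonicity of $\Phi_\varepsilon$ in the positive semidefinite order together with the integrability of the ``top'' function $\log|\Phi_1|$. Everything else — positivity of $X_\varepsilon$, the identity $\Phi_\varepsilon=\Phi+\varepsilon I_m$, and the fact that a nonzero matrix trigonometric polynomial has a determinant with only finitely many zeros on the circle — is routine. An alternative route would be to use a polynomial spectral factorization of $\Phi$ and invoke Jensen's formula on the unit disk directly, but keeping track of the zeros of the (possibly non-outer) factor on the unit circle makes that argument messier than the perturbation one above, so I would prefer the latter.
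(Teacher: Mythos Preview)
Your proof is correct and follows essentially the same route as the paper: perturb to $\Phi_\varepsilon=\Phi+\varepsilon I_m$, invoke \eqref{eq::JKformula} in the coercive regime, and justify the limit interchange via monotone convergence applied to the nonnegative increasing family $\log|\Phi_1|-\log|\Phi_\varepsilon|$. The only difference is cosmetic: you perturb $X$ directly to $X_\varepsilon=X+\varepsilon E$, so the right-hand limit $\log|X_{00}+\varepsilon I_m|\to\log|X_{00}|$ is immediate, whereas the paper perturbs $\Phi$, takes a fresh spectral factor $W_n=\Delta A_n$ of each $\Phi_n$, and then appeals to continuity of spectral factorization to obtain $A_n\to A$; your shortcut avoids that detour.
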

The proof is deferred to the appendix.

A second observation in order to conveniently parameterize the Itakura-Saito divergence constraint is that, by exploiting the cyclic property of the trace, 
\begin{align*}
	\int \tr ( \hat{\Phi}^{-1} \Phi)  &= \int \tr (\hat{\Phi}^{-1} \Delta X \Delta^* )\\
	& = \tr \left(  X \int  \Delta^* \hat\Phi^{-1} \Delta \right) = \langle  X, T(\hat{P})\rangle,
\end{align*}
where $\hat{P}$ is defined from the expansion $$ \hat \Phi^{-1}=\sum_{k=-\infty}^\infty \hat P_k e^{-i\vartheta k} $$
as $\hat{P}:=[\hat{P}_0\dots \hat{P}_n].$

Summing up, we get the following matrix re-parametrization of Problem \eqref{new}:
\begin{equation}
	\label{problem}
	\begin{aligned}
		(X^\circ, L^\circ)= \argmin_{X,L\in \mathbf{Q}_{m(n+1)}}  & \tr(L) \\
		\text{subject to } &  X_{00}\succ 0, \, L\succeq 0, \, X-L\succeq 0, \\ 
		& \ofd_B(D[X-L])=0, \\
		& -\log|X_{00}|+\int \log|\hat{\Phi}| \\& + \langle X, T(\hat{P})\rangle - m\leq \delta.
	\end{aligned}
\end{equation}
We remark once again that  to prove the equivalence between \eqref{new} and \eqref{problem} we still need to show that
$\Phi\succ 0 \text{ a.e.}$ at the optimum: this fact will be established after the variational analysis.

\subsection{The Dual Problem}\label{Sec:dual}
We reformulate the constrained minimization problem in \eqref{problem} as an unconstrained problem by means of Duality Theory.\\
If we use $V, U\in\mathbf{Q}_{m(n+1)}, \; V,U\succeq 0$ as the multipliers associated with the constraints on the positive semi-definiteness of $X-L$ and $L$, respectively; 
$Z\in \mathbf{M}_{m,n}$ as the multiplier associated with the constraint $\ofd_B(D(X-L))=0$ and 
$\lambda \in \mathbb{R}, \lambda\geq 0$, as the multiplier associated with the Itakura-Saito divergence, 
then the Lagrangian of Problem \eqref{problem} is
\begin{equation}
	\label{lagrangian}
	\begin{aligned}
		\mathcal{L}(X,L,\lambda, &U, V, Z) = \tr(L)- \langle V, X-L\rangle -\langle U,  L \rangle+ \\
		&  \langle Z, \ofd_B(D(X-L) )\rangle + \lambda \big(-\log |X_{00}| +\\
		&  \int \log|\hat{\Phi}| +  \langle X, T(\hat{P})\rangle - m - \delta \big)\ \\
		& = \langle L, I - U + V-T(\ofd_B(Z))\rangle + \\
		&  \langle X, T(\ofd_B(Z))-V+\lambda T(\hat{P})\rangle -\\
		&   \lambda \big(\log |X_{00}|  -\int \log|\hat{\Phi}| + m + \delta \big). 
	\end{aligned}
\end{equation}
Note that we have not included the constraint $X_{00}\succ 0$ because, as we will show later on, this condition is automatically met by the solution of the dual problem.

The dual function is defined as the infimum of $\mathcal{L}$ over $X$ and $L$. 
Thanks to the convexity of the Lagrangian, we rely on standard variational methods to
characterize the minimum.
\begin{itemize}
	\item \textit{Partial minimization with respect to $L$:} $\mathcal{L}$ depends on $L$ only through $\langle L, I - U + V- T(\ofd_B(Z))\rangle$ which is bounded below only if
	\begin{equation}
		\label{over_S}
		I-U+V-T(\ofd_B(Z))=0.
	\end{equation}
	Thus, we get that
	\begin{equation*}
		\begin{aligned}
			\inf_{L} \mathcal{L}= \begin{cases}
				\langle X, T(\ofd_B(Z))-V + \lambda T(\hat{P})\rangle - \\
				\lambda \big(\log |X_{00}| -\int \log|\hat{\Phi}| 
				+ m + \delta \big)  & \text{if \eqref{over_S}}\\
				-\infty & \text{otherwise.} 
			\end{cases}
		\end{aligned}
	\end{equation*}
	\item \textit{Partial minimization with respect to $X$:} The terms in $X_{00}$ are bounded below only if 
	\begin{equation}
		\label{over_X0}
		\left[ T(\ofd_B(Z))-V + \lambda T(\hat{P}) \right]_{00}\succ 0
	\end{equation}
	and are minimized if $\lambda>0$ and
	\beq \label{X00eq}
	X_{00}=\left(\big[ T(\hat{P}) + \lambda^{-1}(T(\ofd_B(Z))-V) \big]_{00}\right)^{-1}.
	\eeq
	The Lagrangian is linear in the remaining variables $X_{lh}$, for $(l,h)\neq (0,0)$, and therefore bounded below only if
	\begin{equation}
		\label{over_X}
	\left[ T(\ofd_B(Z))-V + \lambda T(\hat{P})\right]_{lh} = 0 \quad \forall (l,h)\neq (0,0).
	\end{equation}
	Therefore, the minimization of the Lagrangian with respect to $X$ and $L$ is finite if and only if
	\eqref{over_S}, \eqref{over_X0}, and \eqref{over_X} hold in which case
	\begin{equation*}
		\begin{aligned}
			\min_{X,L} \mathcal{L}= 
			- \lambda \big(- \log \big| \big[ T(\hat{P})+\lambda^{-1}(T(\ofd_B(Z))\\ 
			-V)\big]_{00}\big|  -\int \log\big|\hat{\Phi}\big| + \delta \big).  
		\end{aligned}
	\end{equation*}
	Otherwise the Lagrangian has no minimum and its infimum is
	$-\infty $.
\end{itemize}

To simplify the notation, let us define the vector space $\mathcal{O}$ as:
\begin{align*}
	\mathcal{O}:=\lbrace Z \in \mathbf{M}_{m,n}:\ofd_B(Z)=Z, \; j=0,...,n \rbrace;
\end{align*} 
since $Z$ always appears in the form $\ofd_B(Z)$, we can replace it with $Z\in\mathcal O$. Then, we can formulate the dual problem for the Lagrangian \eqref{lagrangian} as
\begin{equation}
	\label{dual0}
\underset{(\lambda, U, V, Z) \in\tilde{\mathcal{C}}}{ \max} \tilde{J}
\end{equation}
where
\[\tilde{J}:=\lambda \Big( \log \big|\big[ T(\hat{P})+ \lambda^{-1}(T(Z) -V) \big]_{00}\big| 
+\int \log|\hat{\Phi}|  - \delta \Big) \]
and the feasible set $\tilde{\mathcal{C}}$ is given by:  
\begin{align*}
	\tilde{\mathcal{C}}& := \lbrace (\lambda, U, V, Z): U, V \in \mathbf{Q}_{m(n+1)}, U, V \succeq 0, Z \in \mathcal{O}, \\ 
	& \lambda \in \mathbb{R}, \lambda >0, I-U+V-T(Z)=0,
	[ \lambda T(\hat{P})+ T(Z ) - \\ & V]_{00} \succ 0, 
	[  \lambda T(\hat{P})+ T(Z )-V]_{lh} = 0 \; \; \forall (l,h)\neq (0,0) \rbrace.
\end{align*}
Note that the constraints $I-U+V-T(Z)=0$ and $U\succeq 0$ are equivalent to the constraint $I+V-T(Z)\succeq 0$. Thus, we can eliminate the redundant variable $U$; 
moreover, by changing the sign to the objective function $\tilde{J} $ and observing that $ \big[  T(\hat{P})+ \lambda^{-1} (T(Z )-V) ]_{00}=\hat{P}_0+\lambda^{-1}(Z_0-V_{00})$, we can rewrite \eqref{dual0} as a minimization problem:
\begin{equation}
	\label{dual}
	\min_{(\lambda, V, Z) \in \mathcal{C}} J
\end{equation}
where 
\[
J:=  \lambda \Big(- \log \big| \hat{P}_0+ \lambda^{-1}(Z_0 -V_{00}) \big| 
-\int \log|\hat{\Phi}|  + \delta \Big). \]
and the corresponding feasible set $ \mathcal{C}$ is:
\begin{align*}
	\mathcal{C}& := \lbrace  ( \lambda,V, Z):  V \in \mathbf{Q}_{m(n+1)},  V \succeq 0, Z \in \mathcal{O},  \\&
	 I+ V - T(Z)\succeq 0, 	\lambda \in \mathbb{R}, \lambda >0, 	[ \lambda \hat{P}_0+Z_0-V_{00} ] \succ 0, \\& 
	 [\lambda(T(\hat{P}))+  T(Z)-V ]_{lh} = 0 \; \; \forall (l,h)\neq (0,0) \rbrace.
\end{align*}


\subsection{Existence of solutions} \label{Sec::exist}

The aim of this section is to show that \eqref{dual} admits solution.
The set  $ \mathcal{C}$ is not compact, as it is neither closed nor bounded. 
We show that we can restrict the search of the minimum of $J$ over a compact set.
Then, since the objective function is continuous over  $ \mathcal{C}$ (and hence over the restricted compact set), 
we can use Weierstrass's Theorem to conclude that the problem does admit a minimum.

The first step consists in showing that we can restrict  $ \mathcal{C}$ to a subset where $\lambda \geq \varepsilon$ with $ \varepsilon > 0$ a positive constant.
\begin{propo} \label{prop::first_restriction}
	Let $ ( \lambda^{(k)} , V^{(k)} , Z^{(k)} )_{k \in \mathbb{N} } $ be a sequence of elements in  $\mathcal{C}$ such that 
	$$	\lim_{k \to \infty} \lambda^{(k)} = 0 . $$
	Then, such a sequence cannot be an infimizing sequence.
\end{propo}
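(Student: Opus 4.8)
The plan is to show that along any feasible sequence with $\lambda^{(k)}\to 0$ the objective $J$ stays asymptotically nonnegative, whereas the optimal value $\inf_{\mathcal{C}}J$ is strictly negative; consequently such a sequence cannot be infimizing. First I would rewrite $J$ along the sequence. For $(\lambda^{(k)},V^{(k)},Z^{(k)})\in\mathcal{C}$ put $S^{(k)}:=\lambda^{(k)}\hat{P}_0+Z_0^{(k)}-V_{00}^{(k)}$; the constraints defining $\mathcal{C}$ force $S^{(k)}\succ 0$, and since $\hat{P}_0+(\lambda^{(k)})^{-1}(Z_0^{(k)}-V_{00}^{(k)})=(\lambda^{(k)})^{-1}S^{(k)}$ one gets
\[ J^{(k)}=\lambda^{(k)}\bigl(m\log\lambda^{(k)}-\log|S^{(k)}|-c\bigr),\qquad c:=\int\log|\hat{\Phi}|-\delta , \]
where $c$ is a finite constant because $\hat{\Phi}$ is bounded and coercive.

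The crux is an a priori upper bound on $S^{(k)}$ coming from feasibility alone. Since $Z^{(k)}\in\mathcal{O}$, each block $Z^{(k)}_j$ has zero diagonal, so $(Z_0^{(k)})_{ii}=0$; since $V^{(k)}\succeq 0$, its leading block $V_{00}^{(k)}\succeq 0$, so $(V_{00}^{(k)})_{ii}\ge 0$. Hence $(S^{(k)})_{ii}=\lambda^{(k)}(\hat{P}_0)_{ii}-(V_{00}^{(k)})_{ii}\le\lambda^{(k)}(\hat{P}_0)_{ii}$, and therefore $\tr S^{(k)}\le\lambda^{(k)}\tr\hat{P}_0$. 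Because $S^{(k)}\succ 0$, applying the AM--GM inequality to its eigenvalues yields $|S^{(k)}|\le(\tr S^{(k)}/m)^m\le(\lambda^{(k)}\tr\hat{P}_0/m)^m$, i.e. $\log|S^{(k)}|\le m\log\lambda^{(k)}+m\log(\tr\hat{P}_0/m)$.

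Substituting this into the identity above, the two $m\log\lambda^{(k)}$ terms cancel and
\[ J^{(k)}\ \ge\ -\lambda^{(k)}\bigl(m\log(\tr\hat{P}_0/m)+c\bigr) , \]
whose right-hand side tends to $0$ as $\lambda^{(k)}\to 0$; thus $\liminf_{k\to\infty}J^{(k)}\ge 0$. To conclude I would invoke that $\inf_{\mathcal{C}}J<0$: by the absence of a duality gap $\inf_{\mathcal{C}}J$ equals minus the optimal value of the primal \eqref{problem}, and this optimal value is strictly positive because the trivial decomposition $\Phi_L=0$ is infeasible whenever $\delta<\delta_{\max}$; strong duality itself follows from strict feasibility of \eqref{problem} (for instance $\Phi=\hat{\Phi}$, $\Phi_L=\rho I_m$ with $\rho>0$ small satisfies every constraint strictly). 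Since $\liminf_k J^{(k)}\ge 0>\inf_{\mathcal{C}}J$, the sequence cannot be infimizing.

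The main obstacle is the feasibility bound of the second step: from $V^{(k)}\succeq 0$ and $Z^{(k)}\in\mathcal{O}$ alone one must force $|S^{(k)}|$ to vanish at least like $(\lambda^{(k)})^{m}$, so that $\lambda^{(k)}\log|S^{(k)}|\to 0$ even when $S^{(k)}$ approaches singularity -- the cancellation of the $m\log\lambda^{(k)}$ terms is what makes the estimate close. A secondary, routine point is the strict negativity of $\inf_{\mathcal{C}}J$: if strong duality is not yet available at this stage, one exhibits instead a single feasible triple with negative objective, whose positive rescalings remain feasible with negative objective thanks to the degree-one homogeneity of $J$ on $\mathcal{C}$.
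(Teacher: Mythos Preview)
Your main technical estimate is correct and quite clean: the trace bound $\tr S^{(k)}\le\lambda^{(k)}\tr\hat P_0$, obtained from $Z^{(k)}\in\mathcal{O}$ and $V^{(k)}\succeq 0$, together with the AM--GM inequality on the eigenvalues of $S^{(k)}$, does give $\log|S^{(k)}|\le m\log\lambda^{(k)}+m\log(\tr\hat P_0/m)$, and the cancellation of the $m\log\lambda^{(k)}$ terms yields $\liminf_k J^{(k)}\ge 0$. The paper does not give its own proof of this proposition---it refers the reader to \cite[Proposition~6.1]{CFZ_TAC19}---so a direct comparison of approaches is not possible here.

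The gap is in the second step, where you argue $\inf_{\mathcal{C}}J<0$. Your strict-feasibility witness $(\Phi,\Phi_L)=(\hat\Phi,\rho I_m)$ fails: the constraint that $\Phi-\Phi_L$ be diagonal forces $\hat\Phi-\rho I_m$ to be diagonal, which is false for a generic $\hat\Phi$. A correct witness for \eqref{new} is, e.g., $\Phi=\hat\Phi$, $\Phi_L=\hat\Phi-\epsilon I_m$, $\Phi_D=\epsilon I_m$ with $\epsilon>0$ small; but Slater is needed for the \emph{matrix} problem \eqref{problem}, where one must in addition produce $L\succ 0$ and $X-L\succ 0$, which does not follow automatically from positivity of $\Phi_L$ and $\Phi_D$ (the rank-$m$ outer-factor parametrizations are only positive semidefinite). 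One way to repair this is to take $X-L=\epsilon I_{m(n+1)}$---note $\ofd_B(D(\epsilon I))=0$---and $X=\hat X+\eta I_{m(n+1)}$ for a suitable $\hat X\succeq 0$ with $\Delta\hat X\Delta^*=\hat\Phi$, then check the Itakura--Saito inequality is strict for small $\eta>0$.

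There is also a logical ordering issue: you invoke ``absence of a duality gap'', but in the paper's development this is established only after existence of the dual minimizer (Section~\ref{Sec:equiv}), so using it here risks circularity. Since Slater's condition for convex programs guarantees strong duality independently of attainment, this is fixable once you supply a correct strictly feasible point; alternatively, bypass duality altogether and exhibit directly a single $(\lambda,V,Z)\in\mathcal{C}$ with $J<0$. Finally, the assertion that the primal optimal value is strictly positive (not merely that every feasible $L$ has $\tr L>0$) requires a short compactness argument on the Itakura--Saito ball, which you should make explicit. The closing remark on ``degree-one homogeneity'' is unnecessary: a single feasible point with negative $J$ already gives $\inf_{\mathcal{C}}J<0$ without any rescaling.
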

The proof is essentially the same as the proof of Proposition 6.1 in \cite{CFZ_TAC19} and it is therefore omitted.\\
As a consequence, minimizing the dual functional over the set  $\mathcal{C}$ is equivalent to minimize it over the set:
\begin{align*}
	\mathcal{C}_1 &:=  \lbrace  ( \lambda,V, Z ) : V \in \mathbf{Q}_{m(n+1)}, V \succeq 0, Z \in \mathcal{O}, \\& I+V-T(Z)\succeq 0,   \lambda \in \mathbb{R}, \lambda \geq \varepsilon,  
	[ \lambda \hat{P}_0+Z_0-V_{00} ] \succ 0,\\& [\lambda(T(\hat{P}))+  T(Z)-V ]_{lh} = 0 \; \forall (l,h)\neq (0,0) \rbrace.
\end{align*}

Next we show that we can restrict $\mathcal{C}_1$ to a subset in which both $(T(Z) - V)$ and $\lambda$ cannot diverge.
\begin{propo} \label{prop::second_restriction}
	Let $ ( \lambda^{(k)} , V^{(k)} , Z^{(k)} )_{k \in \mathbb{N} } $ be a sequence of elements in  $\mathcal{C}_1$ such that either
	$$ 	\lim_{k \to \infty} \p T(Z^{(k)}) - V^{(k)} \p  = +\infty $$ 
	or  
	$$ 		\lim_{k \to \infty} \lambda^{(k)}  = + \infty  $$
	or both. Then, such a sequence cannot be an infimizing sequence.
\end{propo}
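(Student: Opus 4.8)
The plan is to show that any sequence $(\lambda^{(k)},V^{(k)},Z^{(k)})$ as in the statement has $J(\lambda^{(k)},V^{(k)},Z^{(k)})\to+\infty$. Since $\mathcal{C}$ is nonempty, $J$ is finite there, and by Proposition~\ref{prop::first_restriction} we have $\inf_{\mathcal{C}_1}J=\inf_{\mathcal{C}}J<+\infty$; hence a sequence along which $J\to+\infty$ cannot be infimizing, which is the assertion.

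First I would set $W^{(k)}:=T(Z^{(k)})-V^{(k)}$ and $S^{(k)}:=\hat P_0+(\lambda^{(k)})^{-1}(Z^{(k)}_0-V^{(k)}_{00})$, so that $S^{(k)}\succ0$ and $J(\lambda^{(k)},V^{(k)},Z^{(k)})=\lambda^{(k)}\bigl(-\log|S^{(k)}|-\int\log|\hat\Phi|+\delta\bigr)$. Membership in $\mathcal{C}_1$ records the facts I will use: $\lambda^{(k)}\ge\varepsilon$; $I+V^{(k)}-T(Z^{(k)})\succeq0$, i.e. $W^{(k)}\preceq I$, so every diagonal block of $W^{(k)}$ is $\preceq I_m$; the equality constraints give $W^{(k)}_{lh}=-\lambda^{(k)}[T(\hat P)]_{lh}$ for $(l,h)\neq(0,0)$, in particular $W^{(k)}_{jj}=-\lambda^{(k)}\hat P_0$ for $j=1,\dots,n$; and $Z^{(k)}\in\mathcal{O}$, so each block $Z^{(k)}_j$ has zero main diagonal. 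Then I would reduce to the case $\lambda^{(k)}\to+\infty$: if some subsequence of $(\lambda^{(k)})$ were bounded, then along it the off-$(0,0)$ blocks of $W^{(k)}$ are bounded (being bounded multiples of $\lambda^{(k)}$) and $W^{(k)}_{00}$ is bounded too, since it is squeezed between $-\lambda^{(k)}\hat P_0$ (from $S^{(k)}\succ0$) and $I_m$; thus $\|T(Z^{(k)})-V^{(k)}\|=\|W^{(k)}\|$ would be bounded along that subsequence, contradicting the first hypothesis. Since the second hypothesis already gives $\lambda^{(k)}\to+\infty$, in all cases $\lambda^{(k)}\to+\infty$.

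It remains to produce a constant $\beta>0$ with $-\log|S^{(k)}|\ge\int\log|\hat\Phi|-\delta+\beta$ for all large $k$, which then yields $J(\lambda^{(k)},V^{(k)},Z^{(k)})\ge\beta\lambda^{(k)}\to+\infty$. The idea is that $S^{(k)}$ is pushed down, away from its natural upper bound $\hat P_0$. Since $\diag^2(Z^{(k)}_j)=0$ for every $j$, the $(0,j)$ block of $V^{(k)}=T(Z^{(k)})-W^{(k)}$ has diagonal equal to $\lambda^{(k)}\diag^2(\hat P_j)$; hence $\|[V^{(k)}]_{0j}\|$ grows linearly in $\lambda^{(k)}$ for at least one $j\in\{1,\dots,n\}$. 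Inserting this into the elementary bound $\|B\|^2\le\|A\|\,\|C\|$, valid for a $2\times2$ block positive semidefinite matrix with corner blocks $A,C$ and off-diagonal block $B$, applied to $V^{(k)}\succeq0$ (after bounding the blocks $[V^{(k)}]_{jj}$ in terms of $\lambda^{(k)}$ and $\|V^{(k)}_{00}\|$, which the constraints $W^{(k)}_{jj}=-\lambda^{(k)}\hat P_0$, $S^{(k)}\succ0$, $W^{(k)}\preceq I$ permit), forces $\tr(V^{(k)}_{00})\ge c\,\lambda^{(k)}$ for a fixed $c>0$. As $\tr(W^{(k)}_{00})=-\tr(V^{(k)}_{00})$ (because $\diag^2(Z^{(k)}_0)=0$), this gives $\tr(S^{(k)})\le\tr(\hat P_0)-c$, and, combined with $\diag^2(S^{(k)})\preceq\diag^2(\hat P_0)$ and Hadamard's inequality, an upper bound of the form $|S^{(k)}|\le(1-\rho)\,|\diag^2(\hat P_0)|$ with a fixed $\rho\in(0,1)$.

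The hard part is the final step: upgrading this to the strict inequality $|S^{(k)}|<e^{\delta-\int\log|\hat\Phi|}$ for large $k$ — equivalently, excluding the borderline regime in which $-\log|S^{(k)}|-\int\log|\hat\Phi|+\delta\to0$ while $\lambda^{(k)}\to+\infty$. This is precisely where the standing assumption $\delta<\delta_{\max}$, with $\delta_{\max}$ as in~\eqref{deltamax}, is essential — it enters through the Jensen-type inequality $\int\log|\diag^2(\hat\Phi^{-1})|\le\log|\diag^2(\hat P_0)|$, which relates $\delta_{\max}$ to $|\diag^2(\hat P_0)|$ — and it calls for a finer block-matrix estimate than the crude diagonal bound used above; I would carry it out along the lines of the analogous argument in \cite{CFZ_TAC19}. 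Granting $\beta>0$, we conclude $J(\lambda^{(k)},V^{(k)},Z^{(k)})\to+\infty$, so the sequence is not infimizing.
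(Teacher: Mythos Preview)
The paper does not give its own proof of this proposition; it merely states that the argument parallels Proposition~6.2 of \cite{CFZ_TAC19} and points to \cite{Tesi_Falconi} for the details. In that sense your sketch already offers more than the paper's text, and it is aligned with the same external reference. Your reduction of both hypotheses to the single case $\lambda^{(k)}\to+\infty$ in the second paragraph is correct and cleanly argued: the equality constraints pin all off-$(0,0)$ blocks of $W^{(k)}:=T(Z^{(k)})-V^{(k)}$ to $-\lambda^{(k)}[T(\hat P)]_{lh}$, and the squeeze $-\lambda^{(k)}\hat P_0\prec W^{(k)}_{00}\preceq I_m$ bounds the remaining block whenever $\lambda^{(k)}$ is bounded.

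The acknowledged gap in your last paragraph is, however, the whole substance of the proposition, and the intermediate estimates you set up do not close it. Two concrete issues. First, your lower bound $\tr(V_{00}^{(k)})\ge c\,\lambda^{(k)}$ comes from $\|\dd([V^{(k)}]_{0j})\|=\lambda^{(k)}\|\dd(\hat P_j)\|$ together with the $2\times2$ PSD inequality; this requires $\dd(\hat P_j)\neq 0$ for some $j\ge 1$, i.e.\ that the diagonal of $\hat\Phi^{-1}$ is not constant in $\vartheta$, which the standing assumptions do not guarantee --- if it is constant, $c=0$ and the argument yields nothing. Second, even when $c>0$, the Hadamard route gives only $\log|S^{(k)}|\le\log(1-\rho)+\log|\dd(\hat P_0)|$, whereas you need $\log|S^{(k)}|+\int\log|\hat\Phi|<\delta$; by the very Jensen inequality you invoke, $\log|\dd(\hat P_0)|+\int\log|\hat\Phi|\ge\delta_{\max}$, so you would need $\log(1-\rho)<\delta-\delta_{\max}$ \emph{minus} the Jensen slack, and nothing in your derivation ties the size of $\rho$ to these quantities. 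In short, controlling only the diagonal of $S^{(k)}$ is structurally too weak to beat the threshold; the ``finer block-matrix estimate'' you allude to must bound $S^{(k)}$ in the Loewner order, exploiting the full Schur-complement structure of $V\succeq 0$ and $I+V-T(Z)\succeq 0$ together with $Z\in\mathcal O$. Since you ultimately defer to \cite{CFZ_TAC19} for exactly this step, your proposal is at the same level of completeness as the paper itself.
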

The above result is obtained by following arguments similar to the proof of Proposition 6.2 in \cite{CFZ_TAC19} with a few small differences; we refer the interested reader to \cite[Appendix C]{Tesi_Falconi} for the detailed proof. \\
It follows from the previous proposition that there exists $\beta \in \mathbb{R}$ with $ \mid \beta \mid < \infty$ such that 
$
T(Z) - V \succeq \beta I,
$
and $ 0 < \gamma  < \infty $ such that $ \lambda \leq \gamma $. Therefore,  the set $\mathcal{C}_1$ can be further restricted to the set: 
\begin{align*}
	\mathcal{C}_2   &:= \lbrace 
	( \lambda,V, Z):  V \in \mathbf{Q}_{m(n+1)},  V \succeq 0, Z \in \mathcal{O},  \lambda \in \mathbb{R}, \\ & \beta I \preceq T(Z) - V \preceq I,  \gamma \geq \lambda \geq \varepsilon, [ \lambda \hat{P}_0+Z_0-V_{00} ] \succ 0,  \\ & [\lambda(T(\hat{P}))+  T(Z)-V ]_{lh} = 0 \; \forall (l,h)\neq (0,0) \rbrace.
\end{align*}

In addition, it is not possible for $V$ and $Z$ to diverge while keeping  the difference $ T(Z) - V $ finite.
Accordingly, we can further restrict the search for the optimal solution to a subset $\mathcal{C}_3$ in which neither $V$ nor $Z$ can diverge: 
\begin{propo} \label{prop::third_restriction}
	Let $ ( \lambda^{(k)} , V^{(k)} , Z^{(k)} )_{k \in \mathbb{N} } $ be a sequence of elements in  $\mathcal{C}_2$ such that
	\begin{equation} \label{eq::V_divergence}
		\lim_{k \to \infty} \p V^{(k)} \p = +\infty  
	\end{equation}
	or 
	\begin{equation} \label{eq::Z_divergence}
		\lim_{k \to \infty} \p Z^{(k)} \p = +\infty 
	\end{equation}
	or both. Then, such a sequence cannot be an infimizing sequence. 
\end{propo}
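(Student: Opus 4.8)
The plan is to establish the stronger fact that $\p V\p$ and $\p Z\p$ are \emph{uniformly bounded} over the entire set $\mathcal C_2$; this at once precludes the existence of \emph{any} sequence in $\mathcal C_2$ satisfying \eqref{eq::V_divergence} or \eqref{eq::Z_divergence}, and in particular of any infimizing one. The one structural ingredient I would isolate first is that membership $Z\in\mathcal O$ makes the whole main diagonal of the block-Toeplitz matrix $T(Z)$ vanish: every diagonal block of $T(Z)$ equals $Z_0$, and $Z\in\mathcal O$ means $\ofd(Z_0)=Z_0$, i.e. $Z_0$ has zero main diagonal, so $[T(Z)]_{ii}=0$ for all $i$ and hence $[T(Z)-V]_{ii}=-V_{ii}$.

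Next I would bound $V$. Testing the constraint $\beta I\preceq T(Z)-V$ built into $\mathcal C_2$ (which comes from Proposition \ref{prop::second_restriction}) against the $i$-th canonical vector and using the identity above gives $-V_{ii}\ge\beta$, i.e. $V_{ii}\le-\beta$; since also $V\succeq 0$, each diagonal entry of $V$ is confined to $[0,|\beta|]$. Positive semidefiniteness of $V$ then propagates this to all entries, $|V_{ij}|\le\sqrt{V_{ii}V_{jj}}\le|\beta|$, so $\p V\p$ is bounded by a constant depending only on $\beta,m,n$. This contradicts \eqref{eq::V_divergence}.

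Finally I would bound $Z$ by writing $T(Z)=(T(Z)-V)+V$: the first summand has spectrum contained in $[\beta,1]$ by the constraint $\beta I\preceq T(Z)-V\preceq I$ and is therefore norm-bounded, and $V$ is bounded by the previous step, so $\p T(Z)\p$ is uniformly bounded; since each block of $Z$ appears, up to transposition, as a block of $T(Z)$, we get $\p Z\p\le\p T(Z)\p$, contradicting \eqref{eq::Z_divergence}. I do not anticipate a real obstacle here: the argument rests entirely on the observation that $Z\in\mathcal O$ annihilates the main diagonal of $T(Z)$, after which the diagonal of $V$ is trapped and semidefiniteness carries the bound to all of $V$, then to $T(Z)$ and $Z$; the only elementary points worth double-checking are that $\beta\le 0$ (otherwise $\mathcal C_2$ is empty and the claim is vacuous) and the block-norm inequality $\p Z\p\le\p T(Z)\p$.
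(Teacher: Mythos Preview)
Your argument is correct and, in fact, somewhat more economical than the paper's. The paper's proof proceeds in a different order: it first shows that $[Z^{(k)}]_0$ stays bounded by invoking the constraint $\lambda^{(k)}\hat P_0+[Z^{(k)}]_0-[V^{(k)}]_{00}\succ 0$ together with the fact that $[Z^{(k)}]_0$ is symmetric and \emph{traceless} (so an unbounded $Z_0$ would force its minimum eigenvalue to $-\infty$, violating that positivity); only after pinning down $Z_0$ does the paper bound the diagonal \emph{blocks} $[V^{(k)}]_{hh}$ via $\beta I\preceq T(Z^{(k)})-V^{(k)}\preceq I$, then the off-diagonal blocks of $V^{(k)}$ by positive semidefiniteness, and finally the remaining $[Z^{(k)}]_h$. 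Your route sidesteps the first step entirely: you use the sharper observation that $Z\in\mathcal O$ forces not just $\tr Z_0=0$ but every \emph{scalar} diagonal entry of $T(Z)$ to vanish, so the two-sided bound on $T(Z)-V$ immediately traps the scalar diagonal of $V$, and positive semidefiniteness then bounds all of $V$ without any appeal to the constraint $\lambda\hat P_0+Z_0-V_{00}\succ 0$. This is a genuine simplification, and your closing checks (that $\beta\le 0$ whenever $\mathcal C_2\neq\emptyset$, and that $\|Z\|\le\|T(Z)\|$ because each block of $Z$ occurs as a sub-block of $T(Z)$) are both routine and correct.
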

The proof can be found in the appendix. \\
Thus, the minimization over $\mathcal{C}_2$ is equivalent to the minimization over the subset: 
\begin{align*}
	\mathcal{C}_3  &:= \lbrace 
	( \lambda,V, Z ): V \in \mathbf{Q}_{m(n+1)}, \alpha I \succeq V \succeq 0, Z \in \mathcal{O}, \lambda \in \mathbb{R},
	\\& \beta I \preceq T(Z) - V \preceq I,       \gamma \geq \lambda \geq \varepsilon, [ \lambda \hat{P}_0+Z_0-V_{00} ] \succ 0,
	\\&  [\lambda(T(\hat{P}))+  T(Z)-V ]_{lh} = 0 \; \forall (l,h)\neq (0,0) \rbrace 
\end{align*}
for a certain $\alpha > 0$ positive constant.

Finally, we consider a sequence $ ( \lambda^{(k)} , V^{(k)} , Z^{(k)} )_{
	k \in \mathbb{Z} } \in \mathcal{C}_3 $ such that  $ [ ( \lambda^{(k)} )^{-1} \big( [Z^{(k)}]_0 - [V^{(k)}]_{00} \big) + \hat{P_0} ] $ tends to be singular as $k \to \infty$. This implies that $
| ( \lambda^{(k)} )^{-1} \big( [Z^{(k)}]_0 - [V^{(k)}]_{00} \big) + \hat{P_0} | $ tends to zero and hence $J \to + \infty$. Thus, such a sequence cannot be an infimizing sequence. 
Therefore, the final set $\mathcal{C}_C$ is:
\begin{align*}
	\mathcal{C}_C  & := \lbrace 
	(\lambda,V, Z): V \in \mathbf{Q}_{m(n+1)}, \alpha I \succeq V \succeq 0, Z \in \mathcal{O},  
	  \beta I   \preceq \\& T(Z) - V \preceq I, \lambda \in \mathbb{R},  \gamma \geq \lambda \geq \varepsilon,  [ \lambda \hat{P}_0+Z_0-V_{00} ] \succeq \mu I, \\&[\lambda(T(\hat{P}))+  T(Z)-V ]_{lh} = 0 \; \forall (l,h)\neq (0,0) \rbrace
\end{align*}
where $\alpha, \beta, \gamma, \varepsilon$ and $\mu$ such that $ |\alpha|, | \beta|, | \gamma |, | \varepsilon |$ and $| \mu | < + \infty. $
\begin{teor}
	Problem \eqref{dual} is equivalent to 
	\begin{equation*}
		\min_{(\lambda, V, Z) \in \mathcal{C}_C} J(\lambda, V, Z)
	\end{equation*}
	and it admits solution.
\end{teor}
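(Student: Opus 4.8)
The plan is to prove the two assertions in one sweep: first show that the value of Problem~\eqref{dual} does not change if the feasible set $\mathcal{C}$ is shrunk to $\mathcal{C}_C$, and then obtain attainment by applying Weierstrass' theorem to the continuous functional $J$ on the compact set $\mathcal{C}_C$. The equivalence part is essentially a bookkeeping of the four restriction results already established (Propositions~\ref{prop::first_restriction}--\ref{prop::third_restriction} together with the singularity estimate stated immediately before the theorem); the substance is to make the successive restrictions interact correctly along a single infimizing sequence.

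Concretely, for the equivalence I would start from $\mathcal{C}\neq\emptyset$ (so that $\inf_{\mathcal{C}}J$ is finite, bounded below by weak duality against the feasible primal~\eqref{new} and above by the value of $J$ at any feasible triple) and pick an infimizing sequence $(\lambda^{(k)},V^{(k)},Z^{(k)})_{k}\subseteq\mathcal{C}$ with $J(\lambda^{(k)},V^{(k)},Z^{(k)})\to\inf_{\mathcal{C}}J$. If some subsequence had $\lambda^{(k)}\to 0$, that subsequence would itself be infimizing, contradicting Proposition~\ref{prop::first_restriction}; hence $\liminf_k\lambda^{(k)}>0$ and, discarding finitely many terms, a tail of the sequence lies in $\mathcal{C}_1$. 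Running the same contradiction argument with Proposition~\ref{prop::second_restriction} keeps $\|T(Z^{(k)})-V^{(k)}\|$ and $\lambda^{(k)}$ bounded, so the tail lies in $\mathcal{C}_2$; with Proposition~\ref{prop::third_restriction} it keeps $\|V^{(k)}\|$ and $\|Z^{(k)}\|$ bounded, so the tail lies in $\mathcal{C}_3$; and with the final singularity estimate the matrix $[\lambda^{(k)}\hat P_0+Z_0^{(k)}-V_{00}^{(k)}]$ stays bounded away from the boundary of the positive semidefinite cone, i.e.\ $\succeq\mu I$ for a suitable $\mu>0$, so the tail lies in $\mathcal{C}_C$. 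Since this tail is infimizing, $\inf_{\mathcal{C}_C}J\le\inf_{\mathcal{C}}J$, and the reverse inequality is immediate from $\mathcal{C}_C\subseteq\mathcal{C}$; hence the two problems have the same value.

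For attainment I would first check that $\mathcal{C}_C$ is compact. It is bounded: $V$ is confined by $0\preceq V\preceq\alpha I$, $\lambda$ by $\varepsilon\le\lambda\le\gamma$, and $T(Z)$ by $\beta I\preceq T(Z)-V\preceq I$ combined with the bound on $V$; since $T$ is an injective linear map between finite-dimensional spaces, boundedness of $T(Z)$ forces boundedness of $Z$. It is closed, being the intersection of the closed subspace $\mathcal{O}$, the closed interval $\varepsilon\le\lambda\le\gamma$, the affine variety $\{[\lambda T(\hat P)+T(Z)-V]_{lh}=0,\ (l,h)\neq(0,0)\}$, and the closed (non-strict) semidefinite constraints $0\preceq V\preceq\alpha I$, $\beta I\preceq T(Z)-V\preceq I$ and $\lambda\hat P_0+Z_0-V_{00}\succeq\mu I$. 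A closed and bounded subset of a finite-dimensional space is compact. On $\mathcal{C}_C$ one has $\hat P_0+\lambda^{-1}(Z_0-V_{00})=\lambda^{-1}(\lambda\hat P_0+Z_0-V_{00})\succeq\gamma^{-1}\mu I\succ 0$ and this matrix is also bounded above, so its determinant is bounded away from $0$ and from $\infty$; together with $\lambda\in[\varepsilon,\gamma]$ and the finiteness of the constant $\int\log|\hat\Phi|$, this makes $J$ finite and continuous on $\mathcal{C}_C$. Finally, $\mathcal{C}_C$ is nonempty because it contains the tail of the infimizing sequence built above. Weierstrass' theorem then delivers a minimizer of $J$ on $\mathcal{C}_C$, which by the equivalence just proved is a minimizer of Problem~\eqref{dual}.

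The main obstacle I anticipate is the chaining in the equivalence step: one has to rule out infimizing sequences that oscillate (so that, say, $\lambda^{(k)}$ need not converge but has a subsequence tending to $0$, or the norms of $V^{(k)}$, $Z^{(k)}$ are unbounded only along a subsequence), and one has to verify that the constants $\varepsilon,\beta,\gamma,\alpha,\mu$ defining $\mathcal{C}_C$ can be chosen so that a single tail of the sequence obeys all the bounds simultaneously. This is precisely where Propositions~\ref{prop::first_restriction}--\ref{prop::third_restriction} and the concluding singularity estimate carry the weight; everything else is the routine compactness/continuity verification of the last two steps.
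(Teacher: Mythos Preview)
Your proposal is correct and follows essentially the same route as the paper's own proof, which is extremely brief: it simply states that the equivalence has been established by the chain of restrictions preceding the theorem and then invokes Weierstrass on the closed and bounded set $\mathcal{C}_C$ with $J$ continuous. Your write-up fills in the details the paper omits (the subsequence/oscillation argument, the explicit verification of closedness and boundedness, the continuity check via $\lambda\hat P_0+Z_0-V_{00}\succeq\mu I$), but the architecture is identical.
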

\begin {proof}
Equivalence of the two problems has already been proven by the previous arguments. Since $\mathcal{C}_C$ is closed and bounded, hence compact, and $J$ is continuous over $\mathcal{C}_C$,
by the Weierstrass’s Theorem the minimum exists. $\; \blacksquare$
\end{proof}


\section{Solution of the primal problem}\label{Sec:equiv}
In this section, after proving that the primal problem (\ref{new}) and its matrix reformulation (\ref{problem}) are equivalent, we show how to recover the solution of the primal problem.

Let $(\lambda^\circ,V^\circ,Z^\circ)$ be a solution of (\ref{dual}) and $(X^\circ,L^\circ)$ be the corresponding solution of (\ref{problem}). 
Since $X_{00}^\circ$ is positive definite,  $\log|X_{00}^\circ|$ is finite.  
By Lemma \ref{lemm:Jensen-Kolmogorov}, at the optimum $\int  \log|\Phi|$ must be finite as well; this implies that 
$ \Phi (e^{i \vartheta}) $, $ \vartheta \in  \left[ - \pi , + \pi \right], $  may be singular at most on a set of zero measure,
or, in other terms, $\Delta X^\circ \Delta^*  \succ 0 \text{ a.e.}$. This observation leads to the following proposition:
\begin{propo}
\label {ProblemsEquivalence}
Let $(X^\circ,L^\circ)$ be a solution of (\ref{problem}). Then  $\Delta X^\circ \Delta^*  \succ 0 \text{ a.e.}  $. Accordingly, \eqref{new} and \eqref{problem} are equivalent.
\end{propo}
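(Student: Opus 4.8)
The plan is to isolate the single missing ingredient --- that $\Delta X^\circ\Delta^*\succ0$ almost everywhere --- and then to read off the equivalence of \eqref{new} and \eqref{problem} from the (almost complete) re-parametrization already carried out in Section~\ref{section_pb_formulation}. For the a.e.\ positivity I would argue directly. Any solution $(X^\circ,L^\circ)$ of \eqref{problem} satisfies $X_{00}^\circ\succ0$ (this is one of the constraints of \eqref{problem}; equivalently it follows from the dual stationarity relation \eqref{X00eq} together with the feasibility conditions $[\lambda^\circ\hat P_0+Z_0^\circ-V_{00}^\circ]\succ0$ and $\lambda^\circ>0$), while $L^\circ\succeq0$ and $X^\circ-L^\circ\succeq0$ give $X^\circ=L^\circ+(X^\circ-L^\circ)\succeq0$, so that $\Phi^\circ:=\Delta X^\circ\Delta^*\succeq0$ on the whole unit circle. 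Next, factor $X^\circ=A^\top A$ with $A$ real, partitioned into $n+1$ blocks $A_0,\dots,A_n$ of $m$ columns each; then $\Phi^\circ=G^*G$, where $G(e^{i\vartheta})=A\Delta(e^{i\vartheta})^*=\sum_{k=0}^n A_k e^{-i\vartheta k}$. Writing $z=e^{-i\vartheta}$, $G$ is the polynomial matrix $A_0+A_1z+\dots+A_nz^n$, and $G(0)=A_0$ has full column rank $m$ because $A_0^\top A_0=X_{00}^\circ\succ0$; hence some $m\times m$ minor of $G(z)$ is a nonzero polynomial, $G(z)$ has rank $m$ for all but finitely many $z\in\mathbb C$, and in particular $\Phi^\circ(e^{i\vartheta})=G(e^{-i\vartheta})^*G(e^{-i\vartheta})\succ0$ for a.e.\ $\vartheta$. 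This is the claimed property; in particular $\Phi^\circ$ has full normal rank, which is precisely what makes Lemma~\ref{lemm:Jensen-Kolmogorov} applicable to it, yielding the finite value $\int\log|\Phi^\circ|=\log|X_{00}^\circ|$ in agreement with the remark preceding the statement.

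Granted the a.e.\ positivity, the equivalence follows from Section~\ref{section_pb_formulation}: under $\Phi=\Delta X\Delta^*$, $\Phi_L=\Delta L\Delta^*$ the constraints and cost of \eqref{new} were shown there to turn, term by term, into those of \eqref{problem}, the only point left open being that ``$\Phi\succ0$ a.e.'' had been weakened to ``$X_{00}\succ0$'' --- equivalently, that Lemma~\ref{lemm:Jensen-Kolmogorov} had been used to rewrite $\int\log|\Phi|$ as $\log|X_{00}|$. Concretely, a feasible $(X,L)$ of \eqref{problem} maps via $(X,L)\mapsto(\Delta X\Delta^*,\Delta L\Delta^*)$ to a feasible point of \eqref{new} of equal cost $\tr(L)=\tr\int\Phi_L$: $L\succeq0$ and $X-L\succeq0$ give $\Phi_L\succeq0$ and $\Phi-\Phi_L\succeq0$; $\ofd_B(D(X-L))=0$ gives $\Phi-\Phi_L$ diagonal; $\Phi\succ0$ a.e.\ is the property just established (it applies because $X_{00}\succ0$ for every feasible $X$); and, since $\int\log|\Phi|=\log|X_{00}|$ by Lemma~\ref{lemm:Jensen-Kolmogorov}, the last constraint of \eqref{problem} is exactly $\mathcal S_{IS}(\Phi\|\hat\Phi)\le\delta$. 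Conversely, any feasible $(\Phi,\Phi_L)$ of \eqref{new} has $\Phi$ of full normal rank (otherwise $\int\log|\Phi|=-\infty$ and $\mathcal S_{IS}(\Phi\|\hat\Phi)=+\infty>\delta$), and one recovers a feasible $(X,L)$ of \eqref{problem} of the same cost: the positivity constraints by the characterization of nonnegative elements of $\Qc_{m,n}$ recalled in Section~\ref{section_pb_formulation}, and the divergence constraint by realizing $\Phi$ through a minimum-phase spectral factor, for which $\log|X_{00}|=\int\log|\Phi|$. Hence \eqref{new} and \eqref{problem} share the same optimal value and their solutions correspond, which is the assertion.

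The step I expect to demand the most care is a logical one: the discussion preceding the proposition deduces the a.e.\ positivity of $\Phi^\circ$ from the finiteness of $\int\log|\Phi^\circ|$ via Lemma~\ref{lemm:Jensen-Kolmogorov}, but that lemma presupposes that $\Phi^\circ$ has full normal rank; hence it cannot be invoked until full normal rank has been established independently --- this is exactly what the spectral-factor argument above does, using only $X^\circ\succeq0$ and $X_{00}^\circ\succ0$. A secondary technical point is the ``converse'' lifting in the equivalence: one must realize a given feasible $\Phi$ of \eqref{new} by a matrix $X$ that is simultaneously compatible with a positive semidefinite splitting $X=L+(X-L)$ inducing $\Phi_L$ and the diagonal $\Phi-\Phi_L$, and minimum-phase, so that $\log|X_{00}|=\int\log|\Phi|$ and the Itakura--Saito constraint is preserved; this is where the sparse-plus-low-rank spectral machinery of \cite{ZorzSep} does the work.
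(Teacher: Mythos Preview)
Your proposal is correct and follows the same line as the paper's argument (the paragraph immediately preceding the proposition): from $X_{00}^\circ\succ0$ deduce that $\Phi^\circ=\Delta X^\circ\Delta^*$ is a.e.\ nonsingular, and then read off the equivalence from the term-by-term re-parametrization of Section~\ref{section_pb_formulation}. You are also right to flag the logical issue you raise in your last paragraph: the paper's presentation invokes Lemma~\ref{lemm:Jensen-Kolmogorov} to conclude that $\int\log|\Phi^\circ|$ is finite and hence that $\Phi^\circ$ has full normal rank, but the lemma already \emph{assumes} full normal rank, so strictly speaking this is circular. Your direct spectral-factor argument --- factoring $X^\circ=A^\top A$, observing that $G(z)=\sum_k A_kz^k$ has $G(0)=A_0$ of full column rank because $A_0^\top A_0=X_{00}^\circ\succ0$, and concluding that some $m\times m$ minor of $G$ is a nonzero polynomial --- supplies exactly the missing step and makes the argument self-contained; it is essentially what the paper's reasoning implicitly relies on but does not spell out.
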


Now we are ready to show how to recover the solution of the primal problem; to this aim we need the following result, see \cite{songsiri2010graphical}.

\begin{lemm}\label{LemmaYW}
Let $Z\in \mathbf M_{m,n}$ and $W\in \mathbf{Q}_{m}$. If $W\succ 0$ is such that 
\alg{ T(Z)\succeq \left[\begin{array}{cc}W & 0  \\0 & 0 \end{array}\right]} then $ T(Z)\succ 0$.
\end{lemm}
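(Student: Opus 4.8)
The plan is to argue by contradiction, exploiting the block-Toeplitz structure of $T(Z)$ together with the classical fact that a positive semidefinite Toeplitz matrix which is singular must be singular ``all the way up'', i.e.\ its singularity propagates to the smaller leading principal submatrices obtained by deleting the last block row and column. Suppose $T(Z)\succeq 0$ (which follows from the hypothesis, since the right-hand side is $\succeq 0$) but $T(Z)$ is singular. Then there is a nonzero vector $v=(v_0^\top,\dots,v_n^\top)^\top$, partitioned into $m$-dimensional blocks, with $T(Z)v=0$; in particular $v^\top T(Z)v=0$.

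First I would use the hypothesis $T(Z)\succeq \operatorname{diag}(W,0,\dots,0)$ with $W\succ 0$: applied to $v$ this gives $0=v^\top T(Z)v\ge v_0^\top W v_0\ge 0$, hence $v_0=0$. So every null vector of $T(Z)$ has vanishing first block. Next I would invoke the shift-invariance of the block-Toeplitz structure: if $v=(0,v_1,\dots,v_n)$ lies in the kernel of the $(n+1)\times(n+1)$ block-Toeplitz matrix $T(Z)$, then the ``shifted'' vector $v'=(v_1,\dots,v_n,0)$ also lies in the kernel. This is the standard Toeplitz persistence argument: comparing the block rows of $T(Z)v=0$ and $T(Z)v'=0$, the equations for $v'$ are exactly the first $n$ block-equations for $v$ shifted by one, which hold because $v_0=0$; the last block-equation for $v'$ follows because $T(Z)$ is symmetric and the structure makes it a consequence of the earlier ones. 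Then $v'$ is again a null vector of $T(Z)$, so by the first step its leading block $v_1$ must vanish as well. Iterating, $v_0=v_1=\dots=v_n=0$, i.e.\ $v=0$, a contradiction. Hence $T(Z)$ is nonsingular, and being positive semidefinite it is in fact positive definite.

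The step I expect to be the main obstacle is making the ``shift'' argument fully rigorous at the boundary: one must check carefully that the last block-row equation of $T(Z)v'=0$ is genuinely implied, and this is where the specific form of $T(Z)$ (symmetric block-Toeplitz built from $Z_0\in\mathbf Q_m$, $Z_1,\dots,Z_n\in\mathbb R^{m\times m}$ via \eqref{formulaperyinmmn}) is used; a clean way is to identify $T(Z)$ with the truncated $(n+1)$-block section of a bi-infinite Toeplitz operator and note that the kernel of a positive semidefinite Toeplitz section is shift-invariant in the indicated sense. Alternatively, since the statement is cited from \cite{songsiri2010graphical}, I would simply quote it there; but the contradiction-plus-shift argument above is the self-contained route I would write out if a proof were required.
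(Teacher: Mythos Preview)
The paper does not prove this lemma at all: it is stated with the attribution ``see \cite{songsiri2010graphical}'' and then used as a black box. So there is no proof in the paper to compare against; your own final remark (``simply quote it there'') is exactly what the authors do.

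Your self-contained argument is essentially correct, and the overall strategy---show every null vector of $T(Z)$ has $v_0=0$, then shift, iterate---is the standard one. The only soft spot is precisely the one you flag: verifying that $v'=(v_1,\dots,v_n,0)$ is again in the kernel. Your phrasing (``the last block-equation for $v'$ follows because $T(Z)$ is symmetric and the structure makes it a consequence of the earlier ones'') is not quite right as stated, because the last block-row of $T(Z)v'=0$ reads $Z_n^\top v_1+\cdots+Z_1^\top v_n=0$, whereas row~$0$ of $T(Z)v=0$ gives $Z_1 v_1+\cdots+Z_n v_n=0$; these are different equations and one does not follow from the other by symmetry alone. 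The clean fix is to bypass the row-by-row check and use the quadratic form: by the block-Toeplitz (shift-invariance) structure,
\[
v'^\top T(Z)\,v'=\sum_{i,j=0}^{n-1} v_{i+1}^\top [T(Z)]_{ij}\,v_{j+1}
=\sum_{i,j=1}^{n} v_{i}^\top [T(Z)]_{ij}\,v_{j}
= v^\top T(Z)\,v = 0,
\]
and since $T(Z)\succeq 0$, $v'^\top T(Z)v'=0$ forces $T(Z)v'=0$. With this replacement your induction goes through verbatim and yields $v=0$, the desired contradiction. This is presumably what you had in mind with the ``bi-infinite Toeplitz operator'' remark, and it is worth writing explicitly rather than appealing to symmetry of the block rows.
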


Exploiting the constraints $[\lambda(T(\hat{P}))+  T(Z)-V ]_{lh} = 0, \forall (l,h)\neq (0,0)  $ and  $ [ \lambda \hat{P}_0+Z_0-V_{00} ] \succ 0 $, 
it is not difficult to see that 
\alg{ \label{eq:def_V_opt} V^\circ= \lambda^\circ  T(\hat P)+T(Z^\circ)- \left[\begin{array}{cc}W^\circ & 0  \\0 & 0 \end{array}\right]} where 
\alg{  \label{eq:def_W_opt}W^\circ:= Z^\circ_{0}-V^\circ_{00}+\lambda^\circ  \hat P_0\succ 0.} 
Since $V^\circ \succeq 0$ and in view of Lemma \ref{LemmaYW},  $ \lambda^\circ T(\hat P) +T(Z^\circ) \succ 0$. Hence, $V^\circ $
has rank at least equal to $mn$.

Since the duality gap between (\ref{problem}) and (\ref{dual}) is equal to zero, we have that 
$ \langle V^\circ,X^\circ -L^\circ\rangle=0$,
which in turn implies 
\alg{\label{d_Gap_1} V^\circ (X^\circ -L^\circ ) = 0 }
because $V^\circ, X^\circ-L^\circ\succeq 0$.
Recalling that $ \mathrm{rank} ( V^\circ ) \geq mn,$ in view of (\ref{d_Gap_1}) the matrix $X^\circ-L^\circ$ has rank at most equal to $m$. 
Let $\mathrm{rank} (X^\circ-L^\circ) =\tilde{m} \leq m$. 
Then, there exists a full-row rank matrix  $A\in \mathbb{R}^{\tilde{m}\times m(n+1)}$ such that 
\beq \label{eq:AtA} X^\circ-L^\circ=A^\top A.
\eeq 
By (\ref{d_Gap_1}), it follows that 
$  V^\circ A^\top =0. $
Let $Y_D := [v_o\quad v_1 \quad ... \quad v_l] \in \mathbb{R}^{ m(n+1) \times l}$ denote the matrix whose columns form a basis of $\ker(V^\circ)$. 
Note that the dimension $l$ of the null space of $ V^\circ $ is at least $\tilde{m}$ because  $ \mathrm{Im} (A^\top) \subseteq \ker (V^\circ)$ and $ \mathrm{rank} (A^\top) = \tilde{m} $; also  $ l \leq m  $ because $ \mathrm{rank} (V^\circ) \geq mn$. Rewriting the matrix $A^\top$ as
$ A^\top = Y_{D} S $
with $S \in \mathbb{R}^{l \times \tilde m}$, from \eqref{eq:AtA} we obtain
\beq \label{D_opt}  X^\circ-L^\circ = Y_{D} Q_D Y_{D}^\top, \eeq
with $Q_D := S S^\top \in  \mathbf{Q}_{l}$ unknown. 

In a similar fashion, by the zero duality gap between (\ref{problem}) and (\ref{dual}), 
the \textit{complementary slackness condition} for the multiplier associated to the positive semi-definiteness of $L$ reads as
$ \langle U^\circ, L^\circ\rangle=0,$
which in turn implies
$ U^\circ  L^\circ  = 0.$
Repeating the same reasoning as before, it can be seen that, if the dimension of the null space of ${U^\circ}$ is $\tilde{r}$ with $\tilde{r} \geq r$  and
$ Y_L := [u_o\quad u_1 \quad ... \quad u_{\tilde{r}}] \in \mathbb{R}^{m(n+1) \times \tilde{r}}$ is 
a matrix whose columns form a basis of $\ker(U^\circ)$, then $L^\circ$ can be written as
\alg{ \label{L_opt} L^\circ =Y_{L} Q_L Y_{L}^\top }
with $Q_L  \in  \mathbf{Q}_{\tilde{r}}$ unknown.
Plugging (\ref{L_opt}) into (\ref{D_opt}), we then obtain
\alg { \label{sys_1} X^\circ - Y_{L} Q_L Y_{L}^\top = Y_{D} Q_D Y_{D}^\top .}
Assume now that each block of $ X^\circ - L^\circ $ is diagonal, namely
\alg { \label{ofd_D} \ofd ( \left[Y_{D} Q_D Y_{D}^\top \right]_{hk} ) = 0 \quad h,k = 0, ..., n  .}
\begin{rem}
We can make the previous assumption without loss of generality.
Indeed, let $(\Phi^\circ$, $\Phi_L^\circ)$ be the solution of Problem \eqref{new} and $\Phi_D^\circ = \Phi^\circ - \Phi_L^\circ$;  $X$, $L$ and $D = X - L$ are any matrices in $\mathbf{Q}_{m(n+1)}$ such that  $\Phi^\circ = \Delta X \Delta^*$, $\Phi^\circ_L = \Delta L \Delta^*$ and $\Phi^\circ_D = \Delta D \Delta^*$. 
We can always consider a different matrix parametrization $ (\tilde{X}, \tilde{L}, \tilde{D} ) $ for $\Phi^\circ$, $\Phi^\circ_L$ and $\Phi^\circ_D$  as follows.
First notice that there always exists a matrix $\tilde{D}$ with all diagonal blocks such that $ \Phi^\circ_D = \Delta \tilde{D} \Delta^*$; in other words, we can always find $\delta D \in \mathbf{Q}_{m(n+1)}$ such that $ \Delta \delta D  \Delta^* = 0 $ and $\tilde{D} := D + \delta D $ satisfies $\ofd ( \big[ \tilde{D} \big]_{hk} ) = 0$  for $ h,k = 0, ..., n. $ Now, let $\delta X \in \mathbf{Q}_{m(n+1)}$ such that $ \Delta \delta X  \Delta^* = 0 $ and $ \tilde{X} := X +  \delta X $ satisfies \eqref{X00eq}. Define $ \tilde{L} = \tilde{X} - \tilde{D} = X - D  + \delta L $ with $ \delta L := \delta X - \delta{D} $.
It is easy to see that $\Phi^\circ = \Delta \tilde{X} \Delta^*$ and $\hat{\Phi}_L = \Delta \tilde{L} \Delta^*$. This means that  $(\tilde{X}, \tilde{L} )$ is still a solution of Problem \eqref{problem} and it allows us to restrict to solutions of  (\ref{problem}) for which \eqref{ofd_D} holds. 
\end{rem}

By applying the $\ofd$ operator to both sides of (\ref{sys_1}) and exploiting the assumption \eqref{ofd_D},
it is not difficult to obtain:
\beq \label{sys_L} 
\ofd ( \left[Y_{L} Q_L Y_{L}^\top \right]_{00} ) = \ofd (X^\circ _{00}) \eeq
which is a  system of $m(m-1) /2$ linear equations in the $\tilde{r}(\tilde{r}+1)/2$ unknowns $ Q_L $.  Notice that $X_{00}$ is given by (\ref{X00eq}).
Finally, once $L^\circ$ is computed, in order to retrieve $Q_D$ we exploit \eqref{ofd_D} and the following system of $m(m+1)/2$ linear equations: 
\alg { \label{sys_D}\left[Y_{D} Q_D Y_{D}^\top \right]_{00}  =   X^\circ _{00} - L^\circ _{00} . }
Since both the dual and the primal problem admit solution, the resulting systems of equations \eqref{ofd_D}, \eqref{sys_L} and \eqref{sys_D} do
admit solutions. 

\section{The proposed algorithm}\label{Sec:algo}
In this section we propose an algorithm to solve numerically the dual problem. To start with, as observed in Section \ref{Sec:equiv}, we rewrite \eqref{dual} in a different fashion by getting rid of the slack variable
$V \in \mathbf{Q}_{m(n+1)}$. This is done by introducing a new variable $W \in \mathbf{Q}_{m}$ defined, similarly to \eqref{eq:def_W_opt}, as 
\alg{\label{eq:def_W}	W:= Z_{0}-V_{00}+\lambda  \hat P_0\succ 0 }
such that, as in \eqref{eq:def_V_opt}, the variable $V$  can be expressed as
\alg{\label{eq:def_V} V= \lambda  T(\hat P)+T(Z)- \left[\begin{array}{cc}W & 0  \\0 & 0 \end{array}\right] . }	 
Accordingly, the dual problem \eqref{dual} can be expressed in terms of the variables $\lambda$, $W$ and $Z$ as follows:
\begin{equation}\label{eq:dual_W}
\min_{(\lambda, W, Z) \in \mathcal{C}} J
\end{equation}
where 
\[
J:=  \lambda \Big(- \log \big| \lambda^{-1} W \big| -\int \log|\hat{\Phi}|  + \delta \Big) \]
and the corresponding feasible set $ \mathcal{C}$ is:
\begin{align*}
\mathcal{C} := \lbrace & ( \lambda, W, Z):  W \in \mathbf{Q}_{m},  W \succ 0, Z \in \mathcal{O}, \lambda \in \mathbb{R},  \\& 
\lambda >0,\lambda  T(\hat P) + T(Z)- \left[\begin{array}{cc}W & 0  \\0 & 0 \end{array}\right] \succeq 0, \\& I + \lambda  T(\hat P) - \left[\begin{array}{cc}W & 0  \\0 & 0 \end{array}\right] \succeq 0 \rbrace.
\end{align*}
We can further simplify our problem as follows. First, we observe that the constraint 
\beq \label{eq:V_sdp}
V = \lambda  T(\hat P) + T(Z)- \left[\begin{array}{cc}W & 0  \\0 & 0 \end{array}\right] \succeq 0
\eeq
implies 
$$
\lambda  T(\hat P) + T(Z) \succeq \left[\begin{array}{cc}W & 0  \\0 & 0 \end{array}\right] 
$$
and then, by Lemma \ref{LemmaYW}, $ \lambda  T(\hat P) + T(Z) \succ 0 $. 
Now, we can easily rewrite \eqref{eq:V_sdp} recalling the characterization of a symmetric positive semidefinite matrix using the Schur complement. 
To this aim, it is convenient to introduce the linear operators $ T_{0,0} : \mathbf{M}_{m,n} \to \mathbf{Q}_m,$  $ T_{0,1:n} : \mathbf{M}_{m,n} \to \mathbf{M}_{m,n-1}$ and 
$ T_{1:n,1:n} : \mathbf{M}_{m,n} \to \mathbf{Q}_{mn}$ that, for a given matrix $ H \in \mathbf{M}_{m,n} $ construct a symmetric block-Toeplitz matrix and extract the blocks in position $(0,0)$, $(0,1:n)$ and $(1:n,1:n)$, respectively.
With this notation, we have 
$$
T(Z + \lambda \hat{P} ) = \left[\begin{array}{cc} T_{0,0}(Z + \lambda \hat{P} ) & T_{0,1:n}(Z + \lambda \hat{P} )  \\ 
T_{0,1:n}^\top(Z + \lambda \hat{P} ) & T_{1:n,1:n}(Z + \lambda \hat{P} ) \end{array}\right]
$$
and the constraint \eqref{eq:V_sdp} is equivalent to require $ T_{1:n,1:n}(Z + \lambda \hat{P} )  \succ 0$ {and} $ W  \preceq Q (\lambda, Z)  $
with 
\begin{align*}
	Q (\lambda, Z) :=   T_{0,0}(Z + \lambda \hat{P} )  - T_{0,1:n}(Z + \lambda \hat{P} ) \times T_{1:n,1:n}^{-1} (Z +\\  \lambda \hat{P} )  T_{0,1:n}^\top(Z + \lambda \hat{P} ).
\end{align*} 

In a similar fashion, the last matricial inequality constraint in $\mathcal{C}$ can be equivalently expressed as $W \preceq R(\lambda)$ where
\begin{align*}
	R(\lambda) :=  I + T_{0,0} ( \lambda \hat{P} ) - T_{0,1:n}(\lambda \hat{P} ) \big(I + T_{1:n,1:n} (\lambda \hat{P})  \big)^{-1}  \times \\  T_{0,1:n}^\top(\lambda \hat{P} ).
\end{align*}
Therefore, Problem \eqref{dual} can be formulated as 
\begin{equation}\label{eq:dual_W_final}
\min_{(\lambda, W, Z) \in \mathcal{C}} J =  \lambda \Big(- \log \big| \lambda^{-1} W \big| -\int \log|\hat{\Phi}|  + \delta \Big)
\end{equation}
where 
\begin{align*}
\mathcal{C} := & \lbrace (\lambda, W, Z):  Z \in \mathcal{O}, \lambda \in \mathbb{R}, \lambda >0, T_{1:n,1:n}(Z + \lambda \hat{P} )  \succ 0, \\&
 W \in \mathbf{Q}_{m}, W \succ 0,  W  \preceq Q (\lambda, Z), \; W \preceq R(\lambda) \rbrace.
\end{align*}

Solving Problem \eqref{eq:dual_W_final} simultaneously for $\lambda$, $W,$ and $Z$ is not trivial because the inequality constraints $W  \preceq Q (\lambda, Z)$ and $W \preceq R(\lambda)$ both depend on $\lambda$. On the other hand, 
once we fix the dual variable $\lambda$ to a positive constant $\bar{\lambda}>0$, the problem:
\beq \label{pb:lambda_fixed}
\min_{(W, Z) \in \mathcal{C}_{\bar \lambda} } J(\bar{\lambda}, W, Z)
\eeq 
with 
\begin{align*}
\mathcal{C}_{\bar \lambda} := \lbrace & (W, Z):  Z \in \mathcal{O}, \; W \in \mathbf{Q}_{m},\; T_{1:n,1:n}(Z + \bar{\lambda} \hat{P} )  \succ 0, \\ & W \succ 0,  \; W  \preceq Q (\bar{\lambda}, Z), \; W \preceq R(\bar{\lambda}) \rbrace. 
\end{align*}
can be efficiently solved by resorting to the ADMM algorithm \cite{BoydADMM}.
To this aim, we rewrite Problem \eqref{pb:lambda_fixed} by introducing a new variable $ Y \in \mathbf{Q}_m $  defined as $Y = Q(\bar{\lambda},Z) - W:$ 
\begin{equation}
\label{eq:dual_ADMM}
\begin{aligned}
	\min_{\substack{ (W, Z) \in \mathcal{C}_{W,Z}, \\ {Y \in \mathbf{Q}_m^+} }}  & J = \bar{\lambda} \big(- \log \big| \bar{\lambda}^{-1} W \big| -\int \log|\hat{\Phi}|  + \delta \big) \\
	\text{ subject to }  & Y =  Q(\bar{\lambda},Z) - W
\end{aligned}
\end{equation}
where
\begin{align*}
\mathcal{C}_{W, Z}  :=  \lbrace & ( W, Z): Z \in \mathcal{O}, W \in \mathbf{Q}_m, \; W \succ 0, \\ &  W \preceq R(\bar{\lambda}), \; T_{1:n,1:n}(Z + \bar{\lambda} \hat{P} ) \succ 0  \rbrace
\end{align*}
and  $\mathbf{Q}_m^+ $ denotes the cone of symmetric positive semidefinite matrices of size $ m \times m $. The \textit{augmented Lagrangian} for \eqref{eq:dual_ADMM} is:
\begin{align*}  
\mathcal{L}_{\rho}  (W, Z, Y, M) := \bar{\lambda}  \Big(- \log \big| \bar{\lambda}^{-1} W \big| -\int \log|\hat{\Phi}|  + \delta \Big) + 
	\\ \langle M,  Y - Q (\bar{\lambda}, Z) +  W \rangle + \frac{\rho}{2} \p Y - Q (\bar{\lambda}, Z) + W  \p ^2  
\end{align*}
where $ M \in \mathbf{Q}_m $ is the Lagrange multiplier, and  $ \rho > 0$ is the \textit{penalty parameter}. 
Accordingly, given the initial guesses $ W^{(0)},$ $Z^{(0)},$ $Y^{(0)}$ and $M^{(0)}$, the ADMM updates are:
\begin{align}
\label{eq:ADMM_update1}  & ( W^{(k+1)},  Z^{(k+1)} )  = \argmin_{ (W, Z) \in \mathcal{C}_{ W, Z} }  \mathcal{L}_{\rho} (W, Z, Y^{(k)}, M^{(k)} ) \\
\label{eq:ADMM_update2}  & Y^{(k+1)}  = \argmin_{Y  \in \mathbf{Q}_m^+  }  \mathcal{L}_{\rho} (W^{(k+1)}, Z^{(k+1)}, Y, M^{(k)}) \\
& M^{(k+1)} = M^{(k)} + \rho \big(  Y^{(k+1) } - Q (\bar{\lambda}, Z^{(k+1)}) + W^{(k+1)} \big). \nonumber
\end{align}

Problem \eqref{eq:ADMM_update1} does not admit a closed form solution, therefore we approximate the optimal solution by a gradient projection step:
\begin{align*}
W^{(k+1)} & = \Pi \big( W^{(k)} - t_k \nabla_{W}  \mathcal{L}_{\rho} (W^{(k)}, Z^{(k)} , Y^{(k)}, M^{(k)} ) \big) \\
Z^{(k+1)} & = \Pi_{\mathcal{O}} \; \big( Z^{(k)} - t_k \nabla_{Z}  \mathcal{L}_{\rho} (W^{(k)} , Z^{(k)} , Y^{(k)}, M^{(k)}) \big)
\end{align*}
where: 

\begin{itemize}
\item  $\nabla_{W}  \mathcal{L}_{\rho} (W, Z , Y, M )$ denotes the gradient of the augmented Lagrangian with respect to $W$:
$$ \nabla_{W}  \mathcal{L}_{\rho} = -\bar{\lambda} W^{-1} + M + \rho(Y - Q + W). $$

\item  $\nabla_{Z}  \mathcal{L}_{\rho} (W,  Z , Y, M )$ denotes the gradient of the augmented Lagrangian with respect to $Z$:
\begin{align*} 
	\nabla_{Z}  \mathcal{L}_{\rho} = D \big( \left[ \begin{array}{c} I_m \\ -T^{-1}_{1:n,1:n} T^\top_{0,1:n}  \end{array}\right]  \big(- M  - \rho  (Y -\\ Q  + 
	  W ) \big) \left[ \begin{array}{cc} I_m & -T_{0,1:n} T^{-1}_{1:n,1:n} \end{array} \right] \big)
\end{align*}
where the omitted argument of the operators $ T_{0,1:n} $ and $ T_{1:n,1:n}$ is intended to be equal to $(Z + \bar{\lambda} \hat{P} ).$

\item $\Pi_{\mathcal{O}}$ denotes the projection operator onto $\mathcal{O}$:
$$ \Pi_{\mathcal{O}} (A) = \ofd_B (A).$$	

\item $\Pi$ denotes the projection operator onto the convex cone $ \{S \in \mathbf{Q}_m :  S \preceq R(\bar{\lambda} ) \}. $ 
It is not difficult to see that
$$ 	\Pi (A) =R(\bar{\lambda}) - \Pi_+ ( R(\bar{\lambda}) - A ), $$
where $ \Pi_+ $ is the projection operator onto the cone $\mathbf{Q}_m^+ $.
\item the step-size $t_k$ is determined at each step $k$ in an iterative fashion: we start by setting $t_k =1$ and we decrease it
progressively of a factor $\beta,$ with $ 0  < \beta < 1,$  until the conditions $ W^{(k+1)} \succ 0 $ and $ T_{1:n,1:n}(Z^{(k+1)} + \bar{\lambda} \hat{P} ) \succ 0 $ are met and the Armijo's condition \cite{boyd:vandenberghe:2004} is satisfied.
\end{itemize}
Problem \eqref{eq:ADMM_update2} admits a closed form solution, which can be easily computed as: 
$$	Y^{(k+1)} =   \Pi_+  \Big(Q (\bar{\lambda}, Z^{(k+1)}) -W^{(k+1)}  - \frac{1}{\rho} M^{(k)}  \Big). $$ 
To define the stopping criterion, we need to introduce the following quantities
\begin{align*}
R^P &= Y - Q (\bar{\lambda}, Z^{(k+1)}) + W^{(k+1)} \\
R^D &=  \begin{aligned}[t] D \big(  \left[ \begin{array}{c} I_m \\ -T^{-1}_{1:n,1:n} T^\top_{0,1:n}  \end{array}\right]  (\rho (Y^{(k +1 )} - Y^{(k)}) ) \times \\  \left[ \begin{array}{cc} I_m & -T_{0,1:n} T^{-1}_{1:n,1:n} \end{array} \right] \big) \end{aligned}
\end{align*}
which are referred to as the primal and dual residual, respectively. Notice that the omitted argument of the operators $ T_{0,1:n} $ and $ T_{1:n,1:n}$ is intended to be equal to $(Z^{(k+1)}+ \bar{\lambda} \hat{P} )$. \\
Then, the algorithm stops when the following conditions are met:
\begin{align*}
	\Vert R^P \Vert &\leq \begin{aligned}[t] m \varepsilon^{\text{ABS}} + \varepsilon^{\text{REL}} \max & \{ \Vert W^{(k)}\Vert,  \Vert Q(\bar \lambda, Z^{(k)})\Vert, \Vert Y^{(k)} \Vert \} \end{aligned} \\
	\Vert R^D \Vert &\leq \begin{aligned}[t] m \sqrt{(n+1)} \varepsilon^{\text{ABS}} +  \varepsilon^{\text{REL}} \Vert  D \big(  \left[ \begin{array}{c} I_m \\ -T^{-1}_{1:n,1:n} T^\top_{0,1:n}  \end{array}\right]  \times  \\
		 M^{(k)} \left[ \begin{array}{cc} I_m & -T_{0,1:n} T^{-1}_{1:n,1:n} \end{array} \right] \big)  \Vert \end{aligned}
\end{align*}
where $\varepsilon^{\text{ABS}}$ and $\varepsilon^{\text{REL}}$ are the desired absolute and relative tolerances.

It remains to determine the optimal value $\lambda^\circ$  for $\lambda$ which solves Problem \eqref{eq:dual_W_final}. To this aim, we exploit the following result (see \cite[pp.87-88]{boyd:vandenberghe:2004}):
\begin{propo}\label{propo:convex_minimization}
If $f$ is convex in $(x,y)$ and $\mathcal{C}$ is a convex non-empty set, then the function
\beq
g(x) = \inf_{y \in \mathcal{C}} f(x,y)
\eeq
is convex in $x$, provided that $g(x) > -\infty$ for some $x$. 
The domain of $g$ is the projection of $\dom (f)$ on its $x$-coordinates.
\end{propo}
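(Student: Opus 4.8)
The plan is to establish the classical fact that partial minimization of a jointly convex function over a convex set yields a convex function; the argument is elementary, and the only mild subtlety is that the infimum defining $g$ need not be attained, which is handled by an $\varepsilon$-perturbation. If one prefers, one may first absorb the constraint set into the objective by replacing $f$ with the extended-valued function equal to $f$ when $y\in\mathcal{C}$ and to $+\infty$ otherwise; this function is jointly convex because $f$ is convex and $\mathcal{C}$ is convex, which reduces the statement to the unconstrained case. Below I carry out the argument directly.

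\textbf{Convexity inequality.} Fix $x_1,x_2\in\dom(g)$, $\theta\in[0,1]$, and let $\varepsilon>0$. Since $g(x_i)<+\infty$, by definition of the infimum there exist $y_1,y_2\in\mathcal{C}$ with $f(x_i,y_i)\le c_i(\varepsilon)$, where $c_i(\varepsilon):=g(x_i)+\varepsilon$ if $g(x_i)$ is finite and $c_i(\varepsilon):=-1/\varepsilon$ if $g(x_i)=-\infty$; in either case $c_i(\varepsilon)\to g(x_i)$ in the extended reals as $\varepsilon\downarrow 0$. Because $\mathcal{C}$ is convex, $\theta y_1+(1-\theta)y_2\in\mathcal{C}$, hence it is admissible in the infimum defining $g(\theta x_1+(1-\theta)x_2)$; using joint convexity of $f$,
\begin{align*}
g(\theta x_1+(1-\theta)x_2) &\le f\bigl(\theta x_1+(1-\theta)x_2,\ \theta y_1+(1-\theta)y_2\bigr)\\
&\le \theta f(x_1,y_1)+(1-\theta)f(x_2,y_2)\\
&\le \theta\, c_1(\varepsilon)+(1-\theta)\, c_2(\varepsilon).
\end{align*}
Letting $\varepsilon\downarrow 0$ gives $g(\theta x_1+(1-\theta)x_2)\le\theta g(x_1)+(1-\theta)g(x_2)$, the required inequality. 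The hypothesis that $g(x)>-\infty$ for some $x$ is exactly what rules out the degenerate situation $g\equiv-\infty$, so that $g$ is a bona fide (proper) convex function.

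\textbf{Domain of $g$.} By definition $x\in\dom(g)$ iff $g(x)=\inf_{y\in\mathcal{C}}f(x,y)<+\infty$, i.e.\ iff there is some $y\in\mathcal{C}$ with $f(x,y)<+\infty$, i.e.\ iff $(x,y)\in\dom(f)$ for some $y\in\mathcal{C}$. Equivalently, $\dom(g)$ is the image of $\dom(f)$ --- intersected with the set where $y\in\mathcal{C}$ --- under the coordinate projection $(x,y)\mapsto x$, as claimed.

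\textbf{Main obstacle.} There is essentially no obstacle: the result is standard and the proof is a short application of the definitions. The only points needing a little care are that the infimum is generally not attained --- dealt with by the near-optimal choices $y_1,y_2$ and the limit $\varepsilon\downarrow 0$ --- and the bookkeeping with the value $-\infty$, which the stated hypothesis keeps harmless.
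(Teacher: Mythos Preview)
Your argument is correct and is precisely the standard $\varepsilon$-approximation proof of this classical fact. The paper does not supply its own proof of this proposition; it simply quotes the result from Boyd and Vandenberghe, \emph{Convex Optimization}, pp.~87--88, where the same argument you give appears. So there is nothing to compare: you have filled in what the paper left as a citation, and your handling of the unattained infimum and of the possible value $-\infty$ is clean. Your parenthetical observation that the domain claim really concerns the projection of $\dom(f)$ intersected with $\{y\in\mathcal{C}\}$ (equivalently, of $\dom(\tilde f)$ after absorbing the constraint into the objective) is also apt.
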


This result guarantees that the function
$$
g(\lambda) = \min_{(W, Z) \in \mathcal{C}_{\lambda}} J(\lambda, W, Z)
$$
is convex in $\lambda$. Hence, in order to determine  
$ \lambda^\circ = \argmin_{\lambda > 0} g(\lambda) $
we can choose an initial interval of uncertainty $[a,b]$ containing $\lambda^\circ$, and we progressively reduce it by evaluating $g(\lambda)$ at two points within the interval placed symmetrically, each at distance $h > 0 $ from the midpoint. This is repeated until the width of the uncertainty interval is smaller than a certain tolerance $l > 0$.

The overall procedure to solve the dual problem \eqref{eq:dual_W_final} is summarized in Algorithm~\ref{algo:RDFA}.

\begin{algorithm} 	\caption{}	
\textbf{Input:} $b > a > 0$, $ l > 0$,  $ h > 0$ \\
\textbf{Output:} $(\lambda^\circ,W^\circ, Z^\circ)$
\begin{algorithmic}[1]
	\REPEAT
	\STATE $\tilde{a} = (a+b)/2 - h; \; \tilde{b} = (a+b)/2 + h.$
	\STATE Compute $g(\tilde{a})$ by applying the ADMM with $\lambda =\tilde{a}.$ 
	\STATE Compute $g(\tilde{b})$ by applying the ADMM with $\lambda =\tilde{b}.$ 
	\IF {$g(\tilde{a}) < g(\tilde{b}) $}
	\STATE $b = \tilde{b}$
	\ELSE 
	\STATE $a = \tilde{a}$
	\ENDIF
	\UNTIL{$b-a < l$}	
	\STATE $\lambda^\circ = (a+b)/2$.
	\STATE Compute $(W^\circ, Z^\circ)$ by applying the ADMM with $\lambda = \lambda^\circ.$
\end{algorithmic} 
\label{algo:RDFA}
\end{algorithm}

\section{Identification of ARMA factor models}\label{Sec:ARMAid}
In this section we extend the proposed approach to { ARMA processes.} Consider the ARMA factor model:
\beq \label{ARMA_FA} y(t)= a^{-1}(W_L u(t)+ W_D w(t) ) \eeq
where
$$ a(e^{i\vartheta })=\sum_{k=0}^p a_{k}e^{-i\vartheta k}, \quad a_{k}\in \Rs $$ 
and $W_L, W_D$, $u$ and $w$ are defined analogously to \eqref{MA_FA}.
Notice that $y_{MA} (t) := a y(t) = W_L u(t)+ W_D w(t) $
is a MA process of order $n$ whose spectral density $ \Phi= W_L W_L^*+ W_D W_D^*\in\Qc_{m,n}$ admits a low rank plus diagonal decomposition. Finally, it is worth noting that it is not restrictive to assume that the autoregressive part in (\ref{ARMA_FA}) is characterized by a scalar filter $a$; Indeed, any ARMA factor model can be written in the form of (\ref{ARMA_FA}).

Assume now to collect a realization $\mathrm y^N=\{\,\mathrm y(1) \ldots \mathrm y(N) \,\}$  of numerosity $N$ of the process $y$.
Our aim is to estimate the factor model \eqref{ARMA_FA} and the number of factors $r$. Before proceeding, the following observation needs to be made: there is an identifiability issue in the problem. Indeed, if we multiply $a(z)$, $W_L$ and $W_D$ by an arbitrary non-zero real number $c$, the model remains the same. We can easily eliminate this uninteresting degree of freedom by normalizing the polynomial $a(z)$, so that from now on we assume $a_0 = 1$.

The idea is to estimate first $a$, and then $\Phi_L$ and $\Phi_D$ by preprocessing $\mathrm y^N$ through $a$.
In more detail, the proposed solution consists of the following two steps:
\begin{enumerate}
\item  The \emph{AR dynamic estimation.}  Given the realization $\mathrm y^N$, we estimate the $p$ parameters of the filter $a$ by applying the maximum likelihood estimator proposed in \cite[Section II.b]{CDC2020_Crescente}.
In doing so, we are estimating an AR process whose spectral density is $a^{-1} (a^{-1})^* I_m$.  
\item The \emph{MA dynamic factor analysis.}
Let $\mathrm y_{MA}^N$ be the finite length trajectory obtained by passing through the filter  $a^\circ (e^{i\vartheta})$ the trajectory $\mathrm y^N$ with zero initial conditions. After computing the truncated periodogram $\hat{\Phi} \in\Qc_{m,n} $ from $\mathrm y_{MA} ^N $, we solve Problem \eqref{new} with $\hat{\Phi}$ in order to recover the number of latent factors.
\end{enumerate}
Although the above procedure is suboptimal, the numerical simulations showed that the resulting estimator of the number of factors performs well, see Section \ref{sec:SIM_ARMA}.

\section{Numerical simulations}\label{sec:sim}
In this section, we test the performance of the proposed approach both for MA and ARMA factor models. 
In all the simulations, the parameter $\delta$ is computed according to the empirical procedure of Section \ref{Sec:delta} for $\alpha = 0.5$. Then, Problem \eqref{eq:dual_W_final} is solved by applying Algorithm \ref{algo:RDFA} with $l = 7 $ and $  h = 3 .$ In regard to the ADMM algorithm, we set $\varepsilon^{\text{ABS}} = 10^{-4}$, $\varepsilon^{\text{REL}} = 10^{-4}$ and the penalty term $ \rho = 0.05.$ 

\subsection{Synthetic Example - MA factor models}
\begin{figure}
	\centering
	\includegraphics[width=\linewidth]{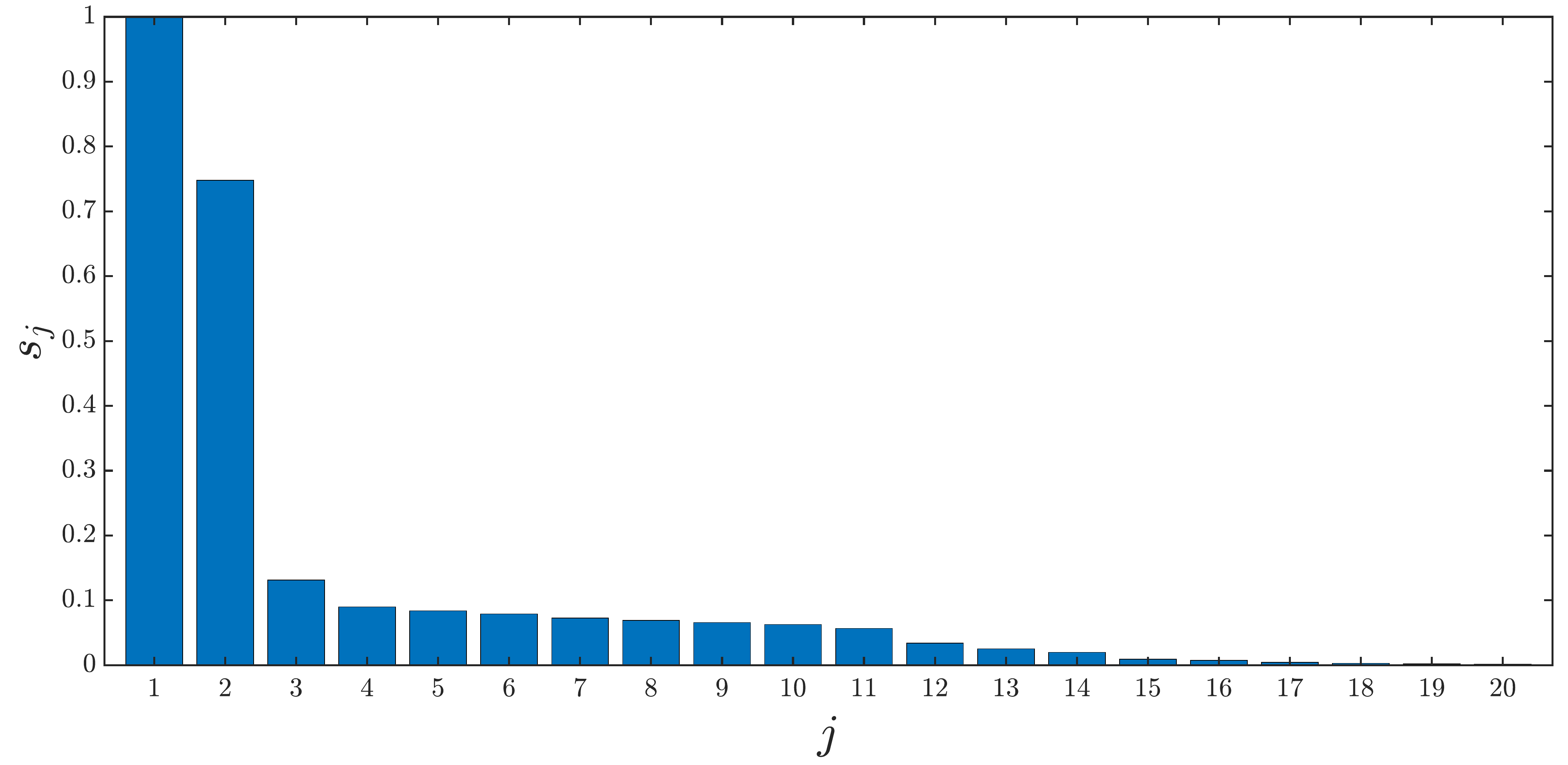}
\caption{Estimated MA factor model with $n=2$, $m=40$, and $r=2$. Integral over the unit circle of the first 20 normalized singular values of $\Phi_L^\circ$ with $N = 5000$.} 
	\label{fig:factor_r2}
\end{figure}
\begin{figure}
\centering
\includegraphics[width=\linewidth]{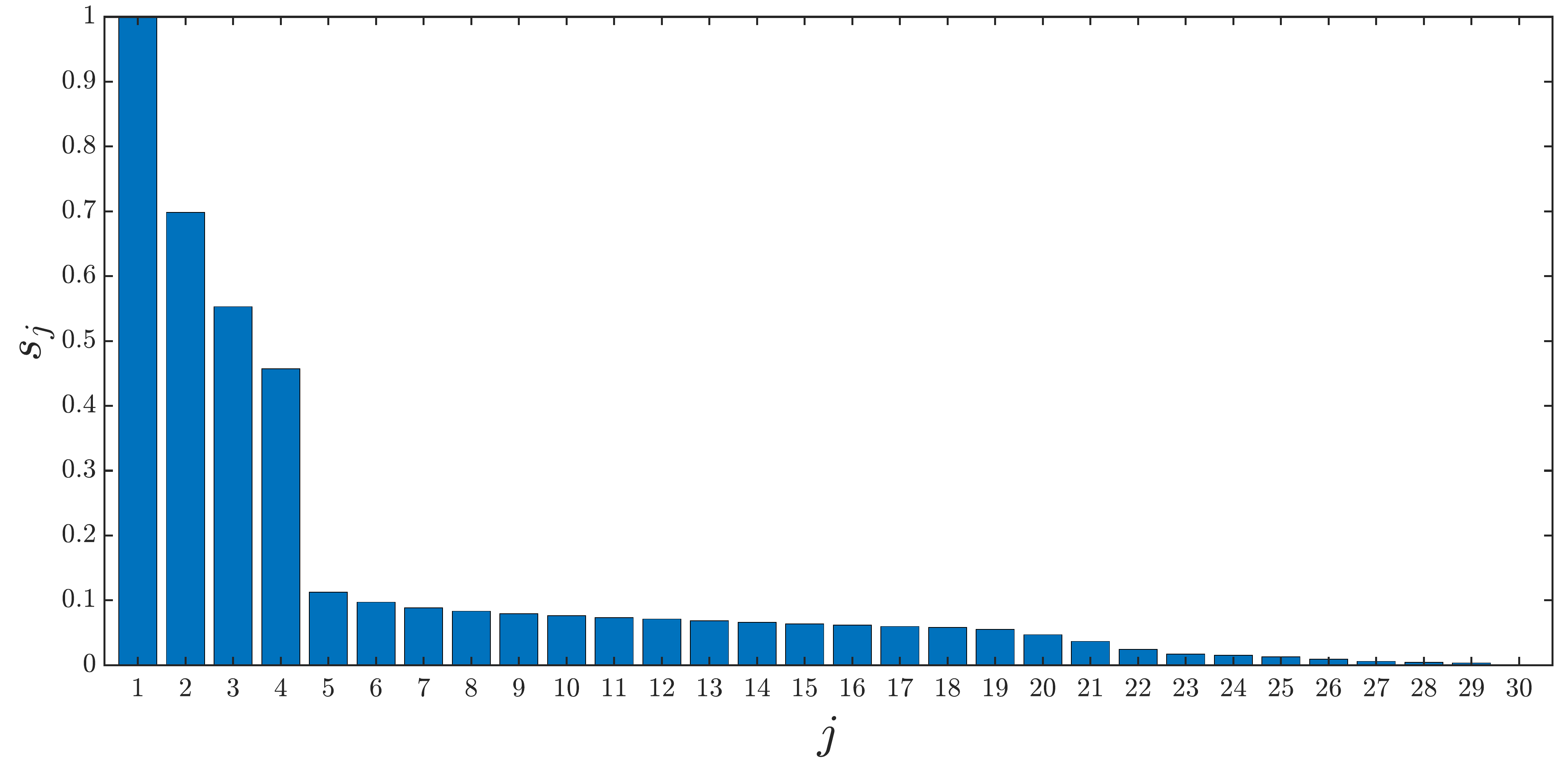}	
\caption{Estimated MA factor model with $n=2$, $m=40$, and $r=4$. Integral over the unit circle of the first 30 normalized singular values of $\Phi_L^\circ$ with $N = 5000$}
\label{fig:factor_r4}
\end{figure}

We consider an MA factor model \eqref{MA_FA} of order $n=2$, with $m=40$ manifest variable and $r=2$ latent factors, computed by randomly generating the zeros of the transfer functions $[W_{L}]_{(i,j)}$'s and $[W_D]_{(i,i)}$'s for $i=1,\dots, m,$ $j=1,\dots,r$ within the circle with center at the origin and radius $0.95$ on the complex plane.   
It is worth noting that $\int \Vert \Phi_L(e^{i\theta}) ) \Vert = 141.83 $ and $\int \Vert \Phi_D(e^{i\theta}) ) \Vert = 31.29 $, that is the idiosyncratic component is not negligible with respect to the latent variable. 
We generate from the model a sample $\mathrm y^N$ of length $N=5000$ and we apply the proposed identification procedure to estimate the number of common factors. 
We define 
$$  
s_j  := \int \frac{\sigma_j(\Phi_L^\circ (e^{i\theta}) ) }{\sigma_1(\Phi_L^\circ (e^{i\theta}) )}
$$
where $ \sigma_j(\Phi_L^\circ (e^{i\theta}) $ denotes the $j-th$ largest eigenvalue of $ \Phi_L^\circ $ at frequency $\theta$. It is clear that $s_j$ represents the integral of the $j-th$ largest normalized singular value of $ \Phi_L^\circ $ over the unit circle. 
The quantities $ s_j$ are plotted in Figure \ref{fig:factor_r2}; we can notice that there is a knee point at $j=2$, so that the numerical rank of $ \Phi_L^\circ$ is equal to $2$ and, in doing so, we can recover the exact number of common factors.

The effectiveness of the proposed estimator is also tested by considering a data sample of numerosity $N=5000$ generated by a MA factor model with $n=2$, $m=40$ and $r=4$ common factors; the integrals of the Frobenius norm of $\Phi_L$  and $\Phi_D$  are $80.76$ and $8.89,$ respectively. 
As showed in Figure \ref{fig:factor_r4}, even in this case we are able to estimate the correct number of latent variables.

Finally, we obtained similar results with different samples and by changing the ``true'' factor model from which we generated the data. 

\subsection{Synthetic Example - ARMA factor models}\label{sec:SIM_ARMA}
\begin{figure}
	\centering
	\includegraphics[width=\linewidth]{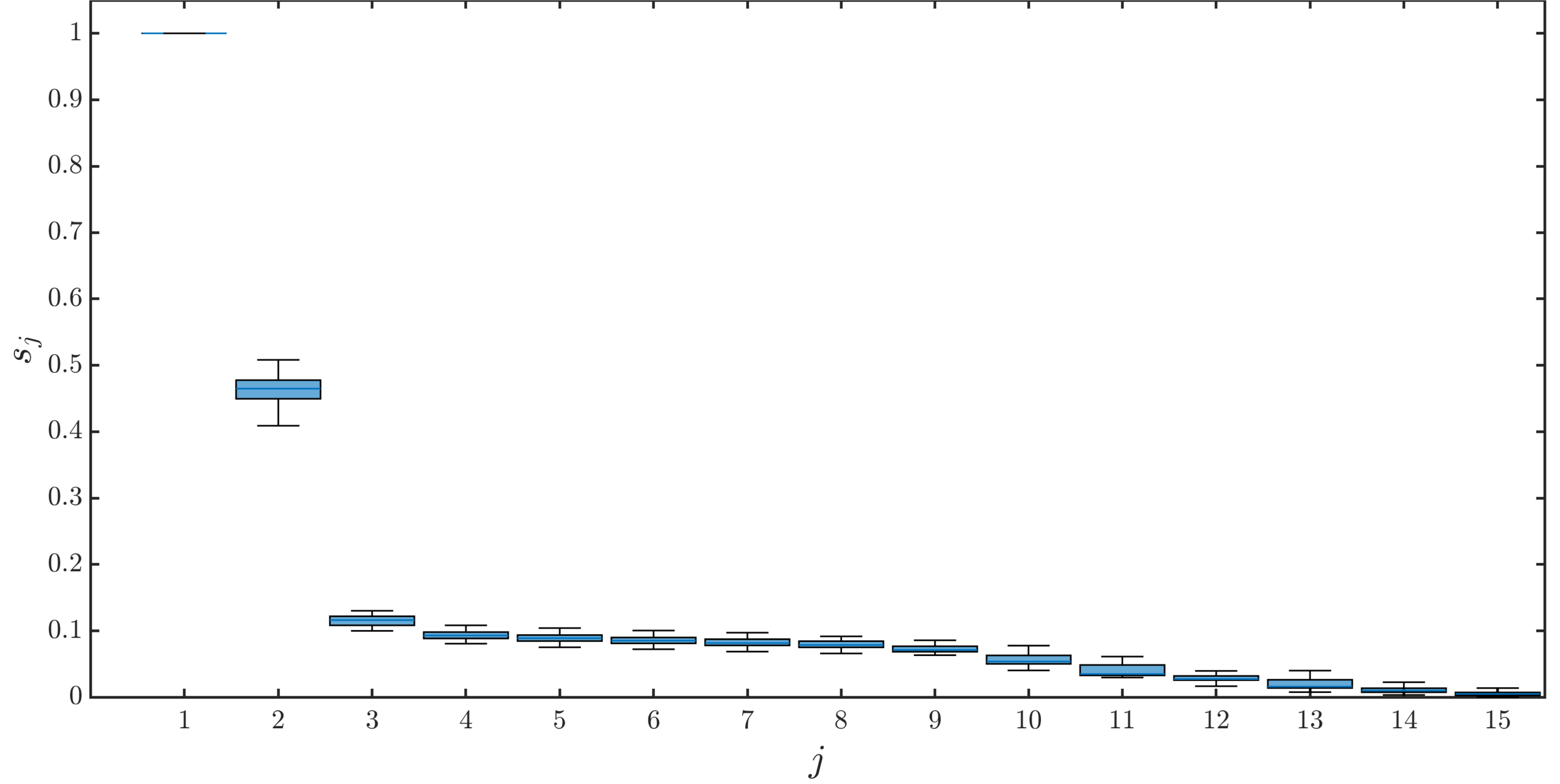}
	\caption{ Estimated ARMA factor model with $m=40,$ $r=2,$ $n=2$ and $p=2.$ Box-plot of the integral over the unit circle of the first 15 normalized singular values of $\Phi_L^\circ$ with $N = 5000$. }
	\label{fig:ARMA_boxplot}
\end{figure}
To provide empirical evidence of the estimation performance of the algorithm of Section \ref{Sec:ARMAid}, a Monte Carlo simulation study composed of 50 experiments is performed.
We randomly build an ARMA factor model \eqref{ARMA_FA} with $m=40$, $r=2$, $n=2$ and $p=2$; without loss of generality we fix $a_0 =1$.  Then, for each Monte Carlo experiment a data sequence of length $N = 5000$ is randomly generated from the model and the ARMA factor model identification procedure is performed.
The boxplot of the quantities $s_j$ for the estimated $\Phi_L^\circ$'s are shown in Figure \ref{fig:ARMA_boxplot} and it reveals that the proposed identification procedure is able to successfully recover the number of latent factors.

\subsection{Smart Building Dataset}

The SMLsystem is a house built in Valencia at the Universidad CEU Cardenal Herrera (CEU-UCH).
It is a modular house that integrates a whole range
of different technologies to improve energy efficiency, with the objective to construct a near zero-energy house. 
A complex monitoring system has been used in the SMLsystem: it has indoor sensors for
temperature, humidity and carbon dioxide; outdoor sensors are also available
for lighting measurements, wind speed, rain, sun irradiance and
temperature. We refer the reader to \cite{ZAMORAMARTINEZ2014} for a detailed description of the building and its monitoring system.
Two datasets from the SMLsystem are available for download at the UCI Machine
Learning repository \url{http://archive.ics.uci.edu/ml}. We take into account $m=17$ sensor signals extracted from these datasets: the indoor temperature (in $^\circ \text{C} $) of the dinning-room and of the room, the weather forecast temperature (in $^\circ \text{C} $), the carbon dioxide (in ppm) in the dinning room and in the room, the relative humidity (in \%) in the dinning room and the room, the lighting in the dinning room and the room (in lx), the sun dusk, the wind (in cm/sec), the sun light (in klx) in the west, east and south facade, the sun irradiance (in dW), the outdoor temperature (in $^\circ \text{C} $) and finally the outdoor relative humidity (in \%).
The data are sampled with a
period of $T = 15 \text{min}$ and each sample is the mean of the last quarter,
reducing in this way the signal noise. The first dataset $\mathrm{y}^{N_1} = \{  \, \mathrm y(1), \ldots, \mathrm y(N_1) \, \}$ was captured during 
March 2011 and has $N_1 = 2764$ points ($ \approx 28$ days), while the second dataset $\mathrm{y}^{N_2} = \{ \, \mathrm y(N_1 + 1), \ldots, \mathrm y(N_1 + N_2) \, \}$ has $N_2 = 1373$ points ($ \approx 14$ days) collected in June 2011.

It is reasonable to expect that the variability of the considered signals may be successfully explained by a smaller number of factors.
Motivated by this reason, we apply the ARMA factor model identification procedure with parameters $n=2$ and $p=2$ using the realization $\mathrm{y}^{N_1}$. As shown in Figure \ref{fig:bar_domhouse}, we obtain an estimate of $ r = 4 $ latent factors.

For the sake of comparison, we also use the Matlab function \texttt{armax()} of the System Identification Toolbox to compute the prediction-error  method (PEM) estimate  for an ARMA model with polynomials $A(z)$ and $B(z)$ of order 2 and $A(z)$ diagonal from the realization $\mathrm{y}^{N_1}$.

Finally, the second dataset $\mathrm{y}^{N_2}$ is used in the validation step to test the prediction capability of the two estimated ARMA models.
The results are summarized in Figure \ref{fig:fit_domhouse} which displays for each output channel $j=1,\dots,m$ the fit (percentage) term:
$$
J_{FIT, j} := 100 \left( 1 - \frac{\sqrt{\sum_{t=N_1 + 1}^{N_1 + N_2} (\mathrm y_j(t) - \hat{\mathrm y}_j(t|t-1))^2 }}{ \sqrt{\sum_{t=N_1 + 1}^{N_1 + N_2} (\mathrm y_j(t) - \bar{\mathrm y}_j)^2 }} \right) 
$$
where $\bar{\mathrm y}_j := \frac{1}{N_2} \sum_{t=N_1 + 1}^{N_1 + N_2} \mathrm y_j(t)$ and $\hat{\mathrm y}_j(t|t-1)$ is the one-step ahead prediction at time $t$ computed with zero initial conditions.
The figure shows that the ARMA factor model matches quite well the measurement data $\mathrm{y}^{N_2}$, reaching fit values similar to the PEM estimate. This is a remarkable result as the performances of the two approaches are essentially the same, but the factor model is parameterized by 257 coefficients, much less than the 869 coefficients of the PEM estimate.

\begin{figure}
	\centering
	\includegraphics[width=\linewidth]{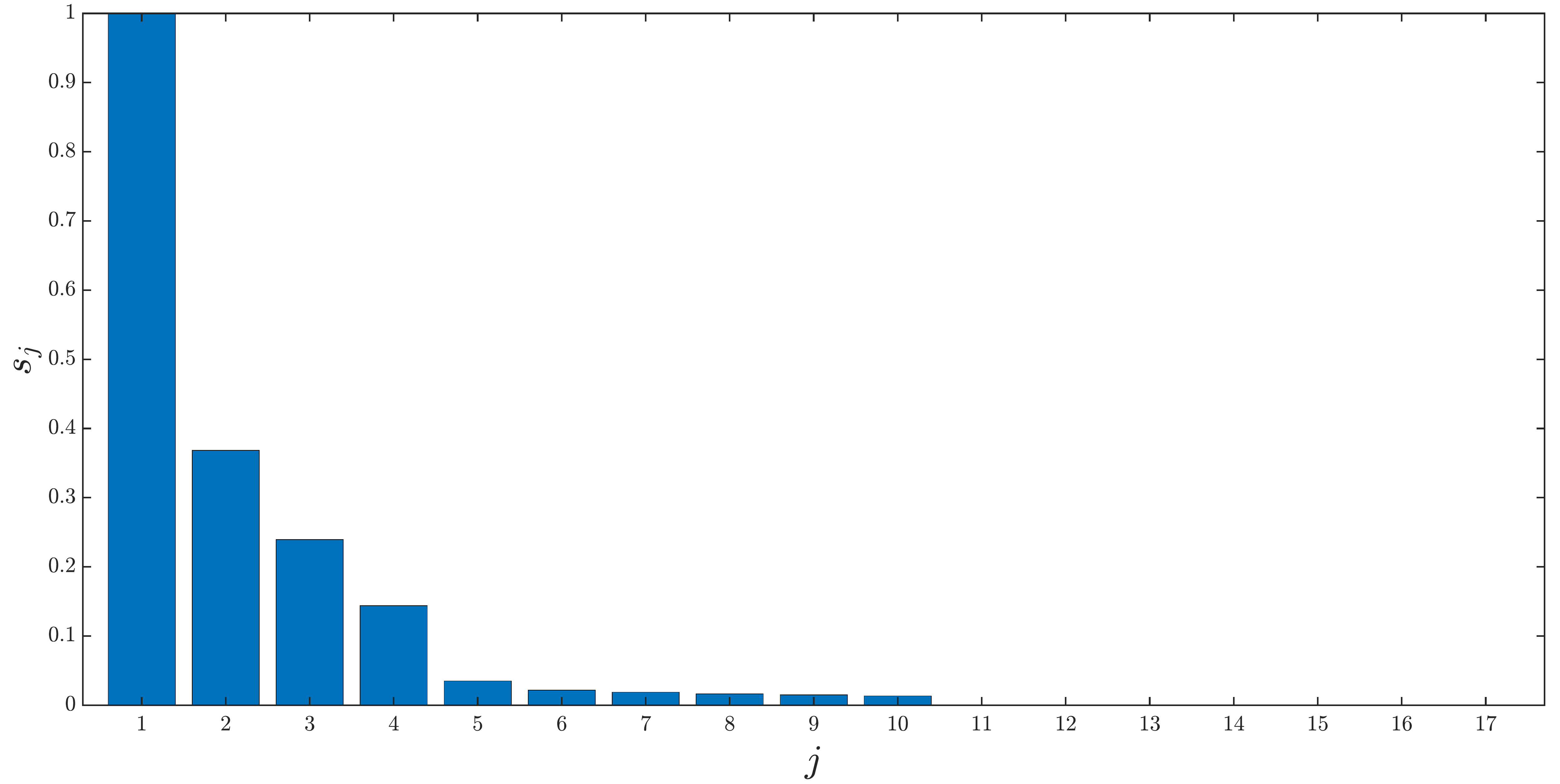}
	\caption{Application of the ARMA factor models identification procedure by using the measurements $\mathrm{y}^{N_1}$ from the SMLsystem as training data. The figure shows the integral over the unit circle of the normalized singular values of $\Phi_L^o$. }
	\label{fig:bar_domhouse}
\end{figure}
\begin{figure}
	\centering
	\includegraphics[width=\linewidth]{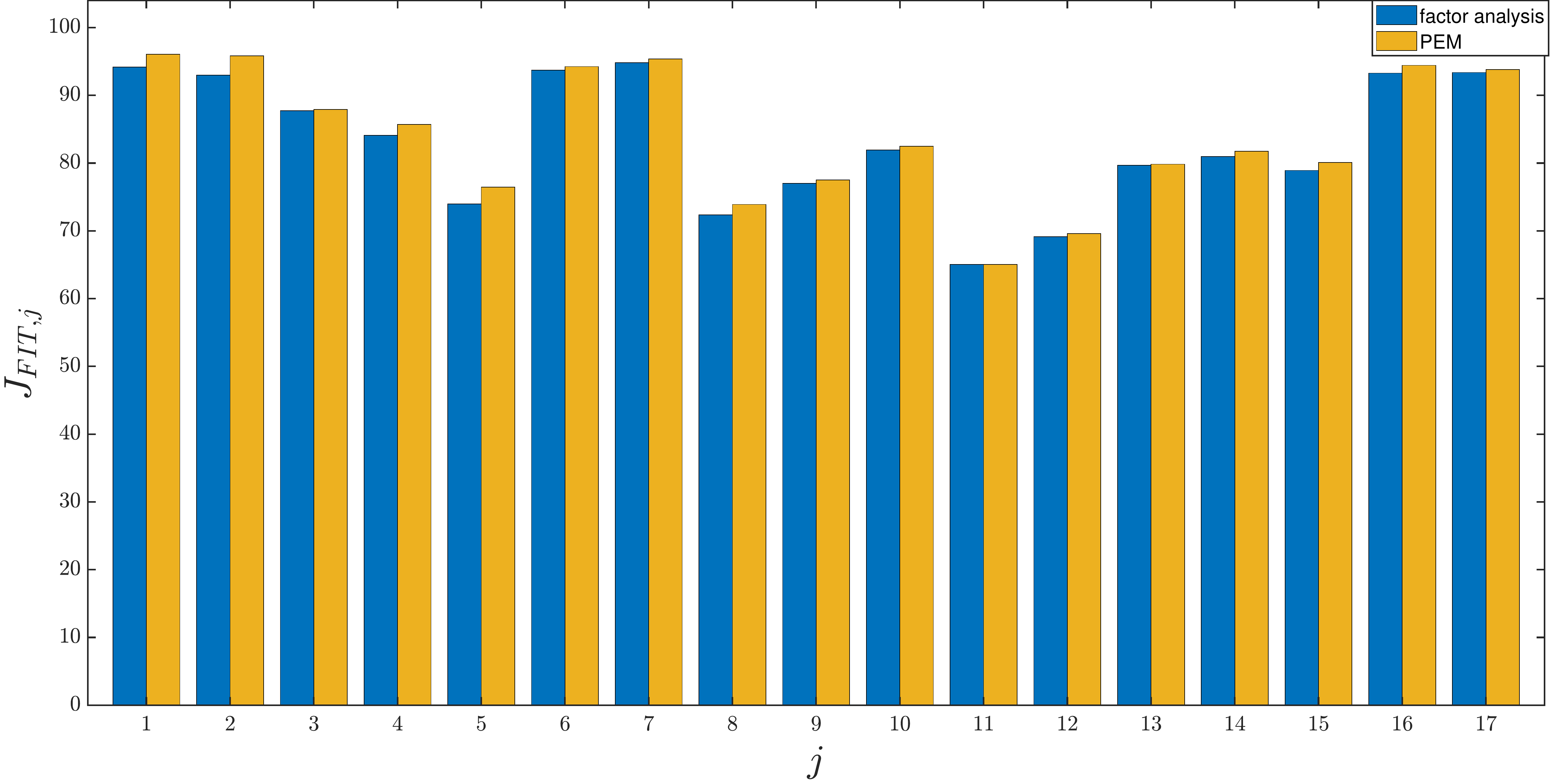}
	\caption{Fit (in percentage) term $J_{FIT,j}$ for each output channel for the model estimated via factor analysis and via PEM. The fit values are computed by using the measurements $\mathrm{y}^{N_2}$ from the SMLsystem as validation data.}
	\label{fig:fit_domhouse}
\end{figure}

\section{Conclusion}\label{sec:concl}
A procedure to estimate the number of factors and to learn ARMA factor models
has been proposed.  This method is based on the solution of an optimization problem 
whose solution has been proven to exist via dual analysis.
The simulations results applying the procedure both to synthetic and real data provide
evidence of a good performance.

\appendix

\subsection*{Proof of Lemma \ref{lemm:Jensen-Kolmogorov}} \label{app:JK}
Since $\Phi=\Delta X \Delta^*$ with $X\succeq 0$, there exists $A\in \mathbb{R}^{m\times m(n+1)}$ such that $X=A^\top A$. 
The matrix $A$ is such that $\Phi \succeq 0$ admits the spectral factorization $ \Phi = W W^*$ where $W := \Delta A^\top$. 
Now, define $\Phi_n := \Phi + \frac{1}{n} I $ with $ n \in \mathbb{N} $ and let $ W_{n} := \Delta A_n$  be a spectral factor 
of $\Phi_n$ with $A_n \in \mathbb{R}^{m\times m(n+1)}.$ 	Clearly, $\lim_{n \to +\infty} \Phi_n = \Phi $; accordingly, $\lim_{n \to +\infty} W_{n} = W$ and $\lim_{n \to +\infty} A_{n} = A$. 	
Since $\Phi_{n}\succ 0$ $\forall \vartheta$ we can exploit \eqref{eq::JKformula} to obtain 
$$ \int  \log|\Phi_n| = \log|A_{n_0}^\top A_{n_0}|. $$
Then, applying the limit operator to both sides, we have
\begin{align*}
	\lim_{n \to +\infty}  \int  \log|\Phi_n| 
	= \log|A_0^\top A_0| = \log |X_{00}|.
\end{align*} 	
To conclude the proof, it remains to show that in the left side of the previous equation it is possible to 
interchange the limit and the integral operators.
To this aim, we introduce the sequence $ \{ f_n \}_{n=1}^{+ \infty} $ where $ f_n(t):= \log|\Phi_n (\vartheta)|$ 
and the function  $f(\vartheta):=\lim_{n\rightarrow + \infty}f_n(t) = \log|\Phi (\vartheta)| $. 
Observe that, since the interval of integration  $[-\pi, \pi]$ is bounded and $ f_1(\vartheta)  < +\infty$ for any
$\vartheta\in [-\pi, \pi]$, then $ \int f_1(\vartheta) d\vartheta < +\infty. $	
We also define the sequence  $ \{ g_n \}_{n=1}^{+ \infty} $  as $g_n(\vartheta) := f_n(\vartheta)-f_1(\vartheta)$ and $g(\vartheta):=\lim_{n\rightarrow + \infty}g_n(\vartheta).$ 
$ \{ g_n \}$ is a pointwise non-increasing  sequence of measurable non-positive functions, 
$$   \dots \leq g_2 (\vartheta) \leq g_1 (\vartheta) \leq 0 , \quad \forall \vartheta \in \left[  -\pi, + \pi \right] $$
converging to $g(\vartheta)$ from above. Hence, it satisfies all the hypotheses of Beppo-Levi's monotone convergence theorem (applied with opposite signs), from which it immediately follows that
$$
\lim_{n\rightarrow + \infty} \int g_n(\vartheta) =  \int g(\vartheta),
$$
and consequently 
{ \beq \label{eq:integral_fn} \lim_{n\rightarrow + \infty} \int f_n(\vartheta) =\int g(\vartheta) + \int f_1(\vartheta) . \eeq }	
Now, since $f_1(\vartheta) < +\infty$ for all $ \vartheta $, 
\beq \label{eq:limit_gn}
g(\vartheta) = f(\vartheta)-f_1(\vartheta),
\eeq 
and, by plugging \eqref{eq:limit_gn} into \eqref{eq:integral_fn}, we finally obtain
$$
\lim_{n\rightarrow + \infty} \int f_n(\vartheta) =\int f(\vartheta).	\quad  \blacksquare
$$

\subsection*{Proof of Proposition \ref{prop::third_restriction}}
Consider a sequence $ ( \lambda^{(k)} , V^{(k)} , Z^{(k)} )_{k \in \mathbb{N} }$ in $ \mathcal{C}_2 $.

We first show that $ [Z^{(k)}]_0$  cannot diverge. Indeed, assume by contradiction that $\lim_{k \to \infty} \p [Z^{(k)}]_0  \p = + \infty$. 
Since it is a symmetric and traceless matrix, this implies 
\begin{equation}\label{eig_Z0}
	\lim_{k \to \infty} \quad \min _ { \alpha^{(k)} \in \sigma  \big(  [Z^{(k)}]_0  \big) }   \alpha^{(k)}   = - \infty .
\end{equation}
In view of \eqref{eig_Z0}, since $\lambda^{(k)} \hat{P}_0$ is bounded and $V^{(k)} $ positive semidefinite  $\forall k$,  then $(  \lambda^{(k)} \hat{P}_0 + [Z^{(k)}]_0 - [V^{(k)}]_{00} ) $ has at least a negative eigenvalue for $k$ sufficiently large, 
so that the sequence $(\lambda^{(k)} , V^{(k)} , Z^{(k)} )$ is not in $ \mathcal{C}_2$. 
We conclude that 
$$
	\lim_{k \to \infty}\p [Z^{(k)}]_0   \p < \infty .
$$
As a consequence, since $ \beta I \preceq T(Z^{(k)}) - V^{(k)} \preceq I $ (which is one of the condition for the sequence to be in $ \mathcal{C}_2 $ ), 
and $ [ T(Z^{(k)}) ]_{hh} = [Z^{(k)}]_0  $ by construction, 
it holds that $\forall k$
$$
	\p [V^{(k)}]_{hh} \p < \infty, \quad h=0,\dots,n .
$$
Then, from $V^{(k)} \succeq 0$ it follows that also the off-diagonal blocks of $V^{(k)}$ must be bounded $\forall k$, i.e.
\begin{equation} \label{eq::Vlh_bound}
	\p [V^{(k)}]_{hl} \p < \infty, \quad l \neq h, \; \; l, h=0, \dotsc, n.
\end{equation} 
Finally, by the boundedness of $ (T(Z^{(k)}) - V^{(k)} )$  and by \eqref{eq::Vlh_bound} we obtain that $\forall k$
\begin{equation}
	\p [Z^{(k)}]_{h} \p < \infty \quad h=1, \dots ,n ,
\end{equation}
which concludes the proof.  $ \; \blacksquare $


\end{document}